\theoremstyle{plain}
\newtheorem{thm}{Theorem}
\newtheorem{lem}[thm]{Lemma}
\newtheorem{prop}[thm]{Proposition}
\newtheorem{rem}{Remark}
\newtheorem{sty1}{Theorem}
\newtheorem{defi}[sty1]{Definition}
\begin{document}
\title{Directional Pinching-Antenna Systems}

\author{
Runxin Zhang,
Yulin Shao,
Yuanwei Liu

\thanks{R. Zhang is with the Department of Electronic Engineering, Tsinghua University, Beijing 100084, China.
}
\thanks{Y. Shao and Y. Liu are with the Department of Electrical and Electronic Engineering, The University of Hong Kong, Hong Kong S.A.R., China.
}
\thanks{
Correspondence: \url{ylshao@hku.hk}.}
}

\maketitle

\begin{abstract}
We propose a directional pinching-antenna system (DiPASS), a comprehensive framework that transitions PASS modeling from idealized abstraction to physical consistency. DiPASS introduces the first channel model that accurately captures the directional, pencil-like radiation of pinching antennas, incorporates a practical waveguide attenuation of $1.3$ dB/m, and accounts for stochastic line-of-sight blockage. A key enabler of DiPASS is our new ``equal quota division'' power allocation strategy, which guarantees predetermined coupling lengths independent of antenna positions, thereby overcoming a critical barrier to practical deployment. Our analysis yields foundational insights: we derive closed-form solutions for optimal antenna placement and orientation in single-PA scenarios, quantifying the core trade-off between waveguide and free-space losses. For multi-PA systems, we develop a scalable optimization framework that leverages directional sparsity, revealing that waveguide diversity surpasses antenna density in enhancing system capacity. Extensive simulations validate our analysis and demonstrate that DiPASS provides a realistic performance benchmark, fundamentally reshaping the understanding and design principles for future PASS-enabled 6G networks.
\end{abstract}

\begin{IEEEkeywords}
Pinching antennas systems, reconfigurable antenna Systems, 6G, directional radiation, beamforming.
\end{IEEEkeywords}

\section{Introduction}
The next generation of wireless communication systems, often envisioned as sixth-generation (6G) networks, is expected to deliver unprecedented performance in terms of data rate, reliability, and connectivity density \cite{dang2020should,liu2022evolution,shao2024theory}. Beyond merely scaling up from 5G, 6G aspires to create an intelligent, ubiquitous, and resilient wireless fabric capable of supporting immersive extended-reality (XR) services, tactile Internet, massive Internet of Things (IoT) connectivity, and AI-driven autonomous systems\cite{ali2025compact,wang2024electromagnetic,shao2021federated,kim2024role}. To satisfy these stringent requirements, future wireless infrastructures must simultaneously achieve ultra-high spectral and energy efficiency, robust coverage, and flexible adaptability in environments characterized by severe mobility, blockage, and multipath effects\cite{wang2024electromagnetic}.

A key technological trend underpinning this evolution is the exploitation of higher-frequency spectrum resources\cite{alsaedi2023spectrum,chow2024recent,ding2025flexible}. By pushing carrier frequencies from the sub-6 GHz band to millimeter-wave (mmWave), terahertz (THz), and even optical domains, researchers unlock orders of magnitude more bandwidth for wireless data transmission. However, high-frequency signals suffer from strong propagation loss, susceptibility to obstruction, and poor diffraction, which severely limit link reliability, especially in non-line-of-sight (NLoS) conditions\cite{ding2025flexible,liu2025pinching,wang2025modeling}. Consequently, the performance bottleneck of future wireless systems will not lie in the digital domain but rather in the physical deployment and reconfiguration of antennas capable of efficiently managing propagation in such hostile environments.

To confront these challenges, various reconfigurable and movable antenna architectures have been proposed to endow wireless systems with new physical degrees of freedom. Reconfigurable Intelligent Surfaces (RISs) can shape electromagnetic waves by dynamically adjusting passive reflecting elements \cite{tang2020wireless,dai2020reconfigurable,wu2021intelligent}; Fluid Antenna Systems (FASs) enable the antenna aperture to physically relocate within a confined space for diversity gain\cite{wong2020fluid}; and Movable Antennas (MAs) allow active elements to slide along tracks to improve spatial coverage\cite{zhu2023modeling,zhang2024polarization}. These technologies collectively illustrate an emerging paradigm shift: from static transceiver design to spatially adaptive, environment-aware communication architectures.

Within this landscape, the Pinching-Antenna System (PASS) has recently emerged as a particularly promising candidate for 6G mmWave and THz networks \cite{ding2025flexible,liu2025pinching,wang2025modeling}. In contrast to conventional fixed or reflective architectures, PASS utilizes low-cost dielectric waveguides as high-efficiency conduits for electromagnetic energy. Along these waveguides, one or multiple pinching antennas (PAs) can be flexibly activated at desired locations to form radiation points in close proximity to users. This design combines the high-frequency capability of waveguides with the spatial reconfigurability of movable antennas, enabling the radiating site to follow user movement or environmental variations. By effectively shortening the free-space propagation distance, PASS significantly mitigates path loss and line-of-sight (LoS) blockage, two dominant impairments in high-frequency communications.

The feasibility of PASS was first validated experimentally by NTT Docomo in 2022, where dielectric-waveguide prototypes demonstrated low-loss transmission and localized ``pinching'' radiation at $60$ GHz\cite{suzuki2022pinching}. Since then, PASS has rapidly evolved from a laboratory curiosity to a recognized research frontier. Numerous works have explored its potential from three complementary perspectives: system modeling, performance optimization, and theoretical analysis:
\begin{itemize}[leftmargin=0.45cm]
    \item First, system-modeling studies established analytical frameworks that divide the overall propagation path into in-waveguide and free-space components\cite{liu2025pinching,wang2025modeling,tegos2025minimum,xiao2025frequency}. These formulations allow the integration of dielectric-waveguide physics with mmWave channel models, paving the way for end-to-end capacity analysis.
    \item Second, optimization-oriented research exploited the new controllable degrees of freedom introduced by the PA's position\cite{ding2025flexible,chen2025dynamic,wang2025pinching}, beamforming\cite{wang2025joint}, and power allocation. Continuous or discrete algorithms have been proposed to maximize received power, sum rate, or fairness, while maintaining computational tractability\cite{wang2025antenna,wang2025pinching}.
    \item Third, performance-analysis works derived closed-form expressions for outage probability, average rate, and spectral efficiency, collectively showing that PASS can outperform traditional fixed-antenna systems and, in certain scenarios, even RIS-assisted networks in terms of spectral efficiency and coverage\cite{tyrovolas2025performance,ouyang2025capacity,liu2025pinching,samy2025pinching}.
\end{itemize}

Despite these advances, a pivotal simplification persists across the existing literature: the widespread assumption that a PA radiates as an omni-directional antenna. While this assumption undoubtedly simplifies mathematical modeling, it starkly contradicts the underlying electromagnetic principles. In reality, a PA functions as a leaky-wave antenna (LWA), generating a highly directional, pencil-like beam concentrated in the forward half-space. Overlooking this fundamental anisotropy leads to a significant overestimation of the effective radiated power and a distorted representation of interference patterns and true coverage zones.

The implications of this ``directional oversight'' are further amplified when considered alongside other practical constraints. The directional radiation pattern directly interacts with realistic waveguide attenuation, which can be as high as $1.3$ dB/m, and the prevalent impact of LoS blockage at high frequencies. An omni-directional model fails to capture the critical interplay between antenna orientation, energy decay along the waveguide, and the probabilistic nature of the link, leading to potentially misleading system design guidelines.

Motivated by this critical gap, this paper undertakes a fundamental re-examination of PASS by placing the directional nature of its radiation at the forefront. We introduce and analyze the directional pinching-antenna system (DiPASS), establishing a comprehensive framework that transitions from idealized abstraction to physically consistent modeling and optimization. Our main contributions are summarized as follows:
\begin{enumerate}[leftmargin=0.5cm]
    \item We establish the first physically consistent directional channel model for PASS. This model explicitly integrates the directional PA radiation via a Gaussian beam pattern, serving as the foundational element. Built upon this core, we incorporate a practical waveguide attenuation coefficient and stochastic LoS blockage, thereby creating a unified framework for system analysis. Crucially, to overcome the inflexibility of existing multi-PA power allocation schemes, we propose a novel ``equal quota division'' strategy. This scheme leverages the memoryless property of exponential decay, ensuring predetermined coupling lengths regardless of PA positions, thereby resolving a critical barrier to practical PASS deployment.
    \item Leveraging the directional model, we uncover the fundamental trade-off governing PASS performance: the competition between in-waveguide attenuation and free-space propagation reliability. For single-PA scenarios, we derive closed-form expressions for the optimal PA placement and orientation under directional radiation constraints. More importantly, we establish a necessary condition for the existence of the optimal PA position. These analytical results provide rigorous mathematical guidelines for system design, shifting the paradigm from empirical rules to theory-driven deployment.
    \item For DiPASS with multiple waveguides and PAs, we develop a scalable optimization framework. The highly directional beams naturally transform the complex joint design into a structured PA-user assignment problem, for which we design an efficient hybrid Hungarian-greedy algorithm. This approach enables a systematic comparison of resource allocation strategies, leading to two key design principles: i) waveguide diversity is systematically more effective than antenna density for boosting the system sum rate; ii) a clear performance-coverage trade-off exists between beamforming strategies, where weighted minimum mean-square error (WMMSE) beamformer maximizes sum-rate, while zero forcing (ZF) beamformer ensures broader user coverage.
\end{enumerate}

\section{System Model}\label{sec:II}
\subsection{Spatial Configuration}
We consider a typical pinching-antenna system in which a base station (BS) is connected to $N$ dielectric waveguides, as illustrated in Fig.~\ref{fig:system_model}. The waveguides extend along the $y$-axis and are uniformly deployed along the $x$-axis. The physical length of each waveguide is $D_y$, and the dielectric material introduces an internal attenuation coefficient of $\alpha_{\text{W}}$~dB/m. Along each waveguide, $L$ PAs are installed. Both the waveguides and PAs share the same cross-sectional dimensions of $a\lambda \times b\lambda$, where $\lambda$ is the wavelength.

\begin{figure}
    \centering
    \includegraphics[width=0.8\linewidth]{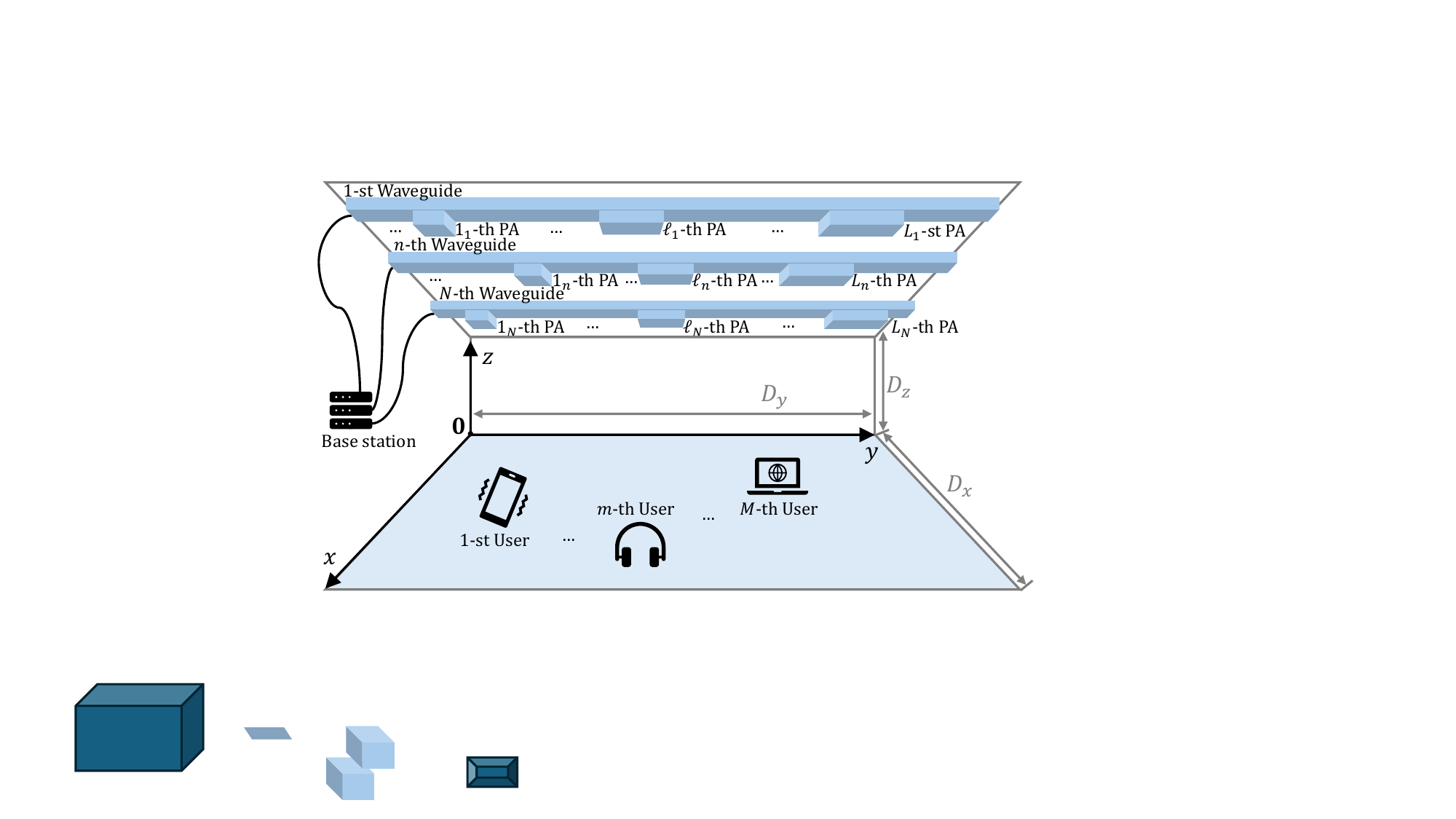}
    \caption{Architecture of a typical PASS. The BS employs $N$ dielectric waveguides hosting a total of $NL$ PAs to serve $M$ single-antenna users.}
    \label{fig:system_model}
\end{figure}

The area served by the PASS infrastructure is modeled as a three-dimensional (3D) box with dimensions $D_x \times D_y \times D_z$. Inside this region reside $M$ single-antenna users randomly distributed in space. The coordinate tuple $\bm{\psi}=(x,y,z)$ is used to describe the positions of waveguides, PAs, and users:
\begin{itemize}[leftmargin=0.5cm]
    \item {\it Waveguide (\text{W}):} We represent the position of a waveguide by the coordinate of the waveguide entrance: $\bm{\psi}_n^{(\text{W})} = (x_n^{(\text{W})}, y_n^{(\text{W})}, D_z)$, where $n = 1, \dots, N$. Since the waveguides extend along the $y$-axis and are uniformly distributed along the $x$-axis, we have $y_n^{(\text{W})} = 0$ and $x_n^{(\text{W})} = \frac{2n-1}{2N} D_x$.
    \item {\it PA (\text{P}):} For the $n$-th waveguide, we represent the position of the $\ell_n$-th PA by the point at which the PA radiates the signal, with coordinates given by $\bm{\psi}_{\ell_n}^{(\text{P})} = (x_{\ell_n}^{(\text{P})}, y_{\ell_n}^{(\text{P})}, z_{\ell_n}^{(\text{P})})$, where $\ell_n = (n-1)L + \ell$, $\ell = 1, \dots, L$, and $n = 1, \dots, N$. Since the PAs are mounted on waveguides, their $x$- and $z$-coordinates should be identical, hence we have $x_{\ell_n}^{(\text{P})} = x_n^{(\text{W})}$ and $z_{\ell_n}^{(\text{P})} = D_z$.  
    \item {\it User (\text{U}):} The coordinates of the $m$-th user are expressed as $\bm{\psi}_m^{(\text{U})} = (x_m^{(\text{U})}, y_m^{(\text{U})}, 0)$, where $m = 1, \dots, M$. Users are uniformly distributed. 
\end{itemize}

\subsection{Signal Transmission Model}
The signal transmission process begins at the BS, where information intended for the $M$ users is denoted by the vector $\bm{s} = \{ s_m \} \in \mathbb{C}^{M \times 1}$, with $\mathbb{E}\{ s_m \} = 1$. To manage multi-user interference, the BS employs a precoding matrix $\bm{W} = [ \bm{w}_1,\bm{w}_2, \dots, \bm{w}_M ] = \{ w_{n,m} \} \!\in\! \mathbb{C}^{N \times M}$  with  $\text{tr}\left( \bm{W}^H \bm{W} \right) = 1$.

The composite received signal at all users can be expressed as
\begin{equation}
    \bm{\xi} = \sqrt{P} \bm{H} \bm{W} \bm{s} + \bm{\nu},
\end{equation}
where 
$ P $ denotes the total transmit power and $ \bm{\nu} $ represents additive white Gaussian noise (AWGN). The overall channel matrix $ \bm{H} \in \mathbb{C}^{M \times N} $ captures the complete transmission path from the BS to the users.

To elucidate the unique signal propagation characteristics in PASS, we decompose $\bm{H}$ into three physically meaningful components
\begin{equation}
\bm{H} = \bm{H}^{(\text{P} \to \text{U})} \bm{H}^{(\text{W} \to \text{P})} \bm{\Lambda}.
\end{equation}
In the decomposition, 
\begin{itemize}[leftmargin=0.5cm]
    \item $\bm{\Lambda}$ is a block-diagonal matrix composed of $N$ all-ones vectors $\mathbf{1}^{(L \times 1)}$, which distributes the beamforming coefficients from $\bm{W}$ across the $L$ antennas on each waveguide.
    \item $\bm{H}^{(\text{W} \to \text{P})}$ is a diagonal matrix of size $NL \times NL$, characterizing the channel from the waveguide input to the PA radiation points, accounting for waveguide attenuation and phase variations.
    \item $\bm{H}^{(\text{P} \to \text{U})}$, with dimensions $M \times NL$, encapsulates the channel coefficients between the PAs and users, incorporating radiation patterns and propagation effects.
\end{itemize}

Focusing on individual user reception, the signal at the $m$-th user can be decomposed as
\begin{eqnarray}\label{e:received_sigal}
    \xi_m &=& \sqrt{P} \sum_{n=1}^{N} [\bm{H}]_{m,n} \sum_{i=1}^M [\bm{W}]_{n,i} s_i + \nu_m \notag\\
    &\triangleq& S_m + I_m + \nu_m,
\end{eqnarray}
where $[\star]_{i,j}$ denote the $(i,j)$-th element of a matrix and $\nu_m \sim \mathcal{N}(0, \sigma_m^2)$. Here, $S_m$ represents the desired signal component, while $I_m$ denotes the interference from other users:
\begin{eqnarray}
    S_m &=& \sqrt{P} \sum_{n=1}^{N} [\bm{H}]_{m,n} [\bm{W}]_{n,m} s_m, \\
    I_m &=& \sqrt{P} \sum_{n=1}^{N} [\bm{H}]_{m,n} \sum_{i=1,i\neq m}^M [\bm{W}]_{n,i} s_i.
\end{eqnarray}

While the framework above outlines a general PASS model, achieving a design that is physically consistent and practically relevant requires carefully integrating several key effects.

First, this work explicitly models the directional radiation characteristics of PAs. As stated in the Introduction, a PA operates as a LWA, emitting a pencil-like beam concentrated in the forward half-space (see Fig.~\ref{f:directional}). Nevertheless, numerous system-level analyses persist in using the omni-directional assumption for mathematical tractability. This oversight leads to inflated performance predictions and misrepresents coverage patterns. Our DiPASS framework is the first to systematically integrate a physically-accurate directional radiation pattern into the PASS channel model.

Moreover, to enhance the DiPASS's practicality, we adopt and integrate more realistic parameters for two critical effects:
\begin{itemize}[leftmargin=0.5cm]
    \item \textit{Waveguide attenuation}: Unlike existing works that either ignore waveguide attenuation \cite{chen2025dynamic,tegos2025minimum,wang2025pinching} or set a small coefficient (e.g., $\alpha_{W} = 0.08$ dB/m) \cite{ouyang2025rate,ding2025flexible,samy2025pinching}, this paper adopts the practical attenuation coefficient $\alpha_{W} = 1.3$ dB/m demonstrated in Docomo's experimental prototypes \cite{suzuki2022pinching}, thereby advancing PASS modeling closer to real-world implementation. This also allows us to analyze the non-trivial trade-off between in-waveguide and free-space propagation losses in Section \ref{sec:IV}.
    \item \textit{Stochastic LoS blockage}: For mmWave communications, LoS blockage is a dominant factor affecting link reliability. To capture this, we introduce a stochastic blockage model where the environment includes independent obstacles that affect the probability of LoS connectivity. This probability is characterized by the existence coefficient $\alpha_{\text{L}}$ for the LoS path between PAs and users.
\end{itemize}

In the following section, we present detailed formulations for $\bm{H}^{(\text{W} \to \text{P})}$ and $\bm{H}^{(\text{P} \to \text{U})}$ that incorporate these physical considerations.

\section{DiPASS Channel Model}\label{sec:III}
This section elaborates the detailed channel modeling for DiPASS that addresses the limitations identified above. We separately characterize the channel from the waveguide input to the PA radiation point, accounting for attenuation, phase variations, and coupling efficiency, followed by the channel from the PA radiation point to the user, employing a realistic directional propagation model.

\subsection{Waveguide-to-PA Channel}
To accurately model the signal journey from the waveguide input to the PA radiation point, we first incorporate exponential attenuation explicitly in the channel model and propose a new, practical power allocation strategy among PAs that overcomes the inflexibility of existing methods.

When multiple PAs are connected to the same waveguide, the distribution of power among them is determined by two key parameters\cite{liu2025pinching,okamoto2021fundamentals}:
\begin{itemize}
    \item \emph{Coupling Coefficient} $\kappa$: the fraction of guided power extracted per unit interaction distance, mainly determined by the waveguide material.
    \item \emph{Coupling Length} $\tau_{\ell_n}$: the physical interval over which the $\ell_n$-th PA remains adjacent to the $\ell$-th waveguide.
\end{itemize}

Prior works, which often neglected waveguide attenuation, typically adopted either \textit{equal power division} or \textit{proportional power division} \cite{liu2025pinching}. However, under the practical condition of exponential signal decay, both schemes require optimizing each PA's length based on its specific position to achieve the desired power split. This leads to a complex and inflexible design where PA lengths vary inconsistently along the waveguide, making real-world implementation impractical.

To resolve this fundamental issue of position-dependent and inflexible design, we propose a more practical scheme named equal quota division.

\begin{defi}[Equal Quota Division]
For a waveguide feeding $L$ PAs, this scheme allocates to each PA an equal power quota of $P/L$ at the waveguide input. The actual power received by a PA is its initial quota attenuated by the exponential decay from the input to its position.
\end{defi}

The critical advantage of this scheme stems from the memoryless property of exponential decay. Since the attenuation at any point depends only on the traversed distance, the amount of power remaining in the waveguide at any PA location is independent of how much power was coupled out by the preceding PAs. Consequently, the coupling length $\tau_{\ell_n}$ required for each PA to extract its $1/L$ share is determined solely by its \emph{sequential order} along the waveguide, rather than by its absolute position. This results in a fixed, predetermined set of coupling lengths $\{\tau_{\ell_n}\}$, making the PASS architecture highly flexible and practical for deployment. 

Building on this scheme, we now formalize the channel model from the waveguide input to the PAs.

\begin{prop}[Waveguide-to-PA Channel]\label{lem:H_WP}
Denote by $\tau_{\ell_n}$ the coupling length of the $\ell$-th PA on the $n$-th waveguide.
\begin{enumerate}[leftmargin=0.5cm]
    \item Under equal quota division, $\tau_{\ell_n}$ depends only on the index $\ell$ (i.e., the PA's order along the waveguide) and is given by
    \begin{equation}\label{e:couple_length}
        \tau_{\ell_n} = \frac{1}{\kappa}\arcsin\sqrt{\frac{1}{L+1-\ell}},~\forall n,
    \end{equation}
    where $\kappa$ is coupling coefficient. For notational clarity, we omit the subscript $n$ and write $\tau_{\ell_n}$ as $\tau_{\ell}$ hereafter.
    \item The channel matrix $\bm{H}^{(\text{W} \to \text{P})}$ is diagonal. Its $\ell_n$-th element, $h_{\ell_n}^{(\text{W} \to \text{P})}$, denoting the channel coefficient from the input of the $n$-th waveguide to its $\ell$-th PA, satisfies
\begin{equation}\label{e:H_WP}
    h_{\ell_n}^{(\text{W} \to \text{P})} = \sqrt{\frac{1}{L}} e^{-\frac{\alpha_{\text{W}}}{2} y_{\ell_n}^{(\text{P})}} e^{-j\frac{2\pi}{\lambda_g}y_{\ell_n}^{(\text{P})}}.
\end{equation}
\end{enumerate}
\end{prop}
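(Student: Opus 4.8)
The plan is to obtain part~1 from the standard coupled-mode description of the PA--waveguide interaction and then read off part~2 as energy and phase bookkeeping. I would first recall the codirectional-coupler relation used in \cite{liu2025pinching,okamoto2021fundamentals}: a PA whose coupling region has length $\tau$ extracts a fraction $\sin^2(\kappa\tau)$ of the guided power incident on that region, while the complementary fraction $\cos^2(\kappa\tau)$ remains in the waveguide and continues to propagate. Together with the continuous in-waveguide power decay $e^{-\alpha_{\text{W}} y}$, this is all the physics the argument needs; the coupling region of each PA is treated as localized at its radiation point $y_{\ell_n}^{(\text{P})}$, consistent with Section~\ref{sec:II}.

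For part~1 the key step is to make the ``quota'' picture precise. Conceptually partition the injected power $P$ into $L$ co-propagating streams of initial power $P/L$, where stream $\ell$ is the one that PA~$\ell$ will extract. By the memoryless property of exponential decay, extracting stream~$k$ at PA~$k$ does not alter the attenuation experienced by the remaining streams, so at the location $y_{\ell_n}^{(\text{P})}$ of PA~$\ell$ the streams $\ell,\ell+1,\dots,L$ are each still carrying $\tfrac{P}{L}e^{-\alpha_{\text{W}} y_{\ell_n}^{(\text{P})}}$, while streams $1,\dots,\ell-1$ have already been removed upstream. Hence the guided power just before PA~$\ell$'s coupling region is $R_\ell = (L+1-\ell)\tfrac{P}{L}e^{-\alpha_{\text{W}} y_{\ell_n}^{(\text{P})}}$, and the defining requirement of equal quota division is that PA~$\ell$ radiate exactly $\tfrac{P}{L}e^{-\alpha_{\text{W}} y_{\ell_n}^{(\text{P})}}$. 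The coupling condition $\sin^2(\kappa\tau_{\ell_n}) = \big(\tfrac{P}{L}e^{-\alpha_{\text{W}} y_{\ell_n}^{(\text{P})}}\big)/R_\ell$ then collapses to $\sin^2(\kappa\tau_{\ell_n}) = \tfrac{1}{L+1-\ell}$, in which every position- and index-$n$-dependent factor has cancelled. Inverting on the branch $\kappa\tau_{\ell_n}\in[0,\pi/2]$ gives \eqref{e:couple_length}, manifestly a function of $\ell$ alone; I would also record the consistency check that $R_\ell - \tfrac{P}{L}e^{-\alpha_{\text{W}} y_{\ell_n}^{(\text{P})}}$, attenuated further to reach PA~$\ell+1$, equals $R_{\ell+1}$, which is precisely where the memoryless property is invoked.

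Part~2 then follows by bookkeeping. The block-diagonal structure of $\bm{H}^{(\text{W}\to\text{P})}$ is inherent to the decomposition: once $\bm{\Lambda}$ replicates each waveguide's beamforming coefficient onto its $L$ PAs, the map to the $NL$ radiation points acts on each PA independently, hence is diagonal. For the $\ell_n$-th diagonal entry, $\big|h_{\ell_n}^{(\text{W}\to\text{P})}\big|^2$ must equal the ratio of the power radiated by PA~$\ell_n$ to the power injected into waveguide~$n$, which by the quota construction is $\tfrac{1}{L}e^{-\alpha_{\text{W}} y_{\ell_n}^{(\text{P})}}$, yielding amplitude $\sqrt{1/L}\,e^{-\frac{\alpha_{\text{W}}}{2} y_{\ell_n}^{(\text{P})}}$; propagation along the guide over the distance $y_{\ell_n}^{(\text{P})}$ with guided wavelength $\lambda_g$ contributes the phase factor $e^{-j\frac{2\pi}{\lambda_g} y_{\ell_n}^{(\text{P})}}$, and the product is \eqref{e:H_WP}.

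I expect the only genuinely delicate point to be in part~1: correctly identifying the coupled-mode law $\sin^2(\kappa\tau)$ and defining the ``quota'' so that the exponential and $n$-dependent factors cancel cleanly; after the expression for $R_\ell$ is established, the remaining manipulation is routine. As a sanity check I would note the boundary case $\ell = L$, where $\tau_{L} = \tfrac{1}{\kappa}\arcsin 1 = \tfrac{\pi}{2\kappa}$ forces the last PA to radiate all the residual guided power, and the case $L=1$, where $\tau_1 = \tfrac{\pi}{2\kappa}$ and $h_{1_n}^{(\text{W}\to\text{P})}$ reduces to pure attenuation and phase, as expected.
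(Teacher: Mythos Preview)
Your proposal is correct, and the route is genuinely different from the paper's. The paper first writes the channel coefficient as an explicit product over upstream PAs, with each factor contributing a segment attenuation $e^{-\frac{\alpha_{\text{W}}}{2}(y_{i_n}^{(\text{P})}-y_{(i-1)_n}^{(\text{P})})}$ and a residual-amplitude factor $\sqrt{1-\sin^2(\kappa\tau_i)}$; it then observes that the attenuation factors telescope into a single $e^{-\frac{\alpha_{\text{W}}}{2}y_{\ell_n}^{(\text{P})}}$, sets the remaining ``equivalent coupling efficiency'' $\big(\prod_{i=1}^{\ell-1}\cos(\kappa\tau_i)\big)\sin(\kappa\tau_\ell)$ equal to $\sqrt{1/L}$, and verifies the claimed $\tau_\ell$ by mathematical induction on $\ell$. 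You instead work directly at the power level with the stream decomposition, so that the residual power $R_\ell=(L+1-\ell)\tfrac{P}{L}e^{-\alpha_{\text{W}} y_{\ell_n}^{(\text{P})}}$ is available in closed form and $\sin^2(\kappa\tau_\ell)=1/(L+1-\ell)$ drops out without an explicit induction (your ``consistency check'' is the inductive step in disguise, but it is immediate). Your approach is more conceptual and makes the position-independence transparent from the outset; the paper's product expansion is more explicit about where each physical factor enters and yields the amplitude expression \eqref{e:H_WP} as a by-product of the same computation, whereas you recover it separately by power-ratio bookkeeping. Both are sound.
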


\begin{proof}
We begin by expressing the channel coefficient as
\begin{align}\label{e:H_WP_detail}
h_{\ell_n}^{(\text{W} \to \text{P})} = &
\prod_{i=1}^{\ell-1}
\underbrace{e^{-\frac{\alpha_{\text{W}}}{2} \left(y_{i_n}^{(\text{P})} - y_{(i-1)_n}^{(\text{P})}\right)}}_{\text{(a)}}
\underbrace{\sqrt{1 - \underbrace{\sin^2(\kappa \tau_i)}_{\text{(b)}}}}_{\text{(c)}} \notag\\
\cdot & e^{-\frac{\alpha_{\text{W}}}{2} \left(y_{\ell_n}^{(\text{P})} - y_{(\ell-1)_n}^{(\text{P})} \right)}
\sin(\kappa \tau_\ell)
\underbrace{e^{-j\frac{2\pi}{\lambda_2}y_{\ell_n}^{(\text{P})}}}_{\text{(d)}}  \notag\\
=& \left[\prod_{i=1}^{\ell-1}\sqrt{1 - \sin^2(\kappa \tau_i)}
\sin(\kappa \tau_\ell)\right] \notag\\
&\cdot  e^{-\frac{\alpha_{\text{W}}}{2} y_{\ell_n}^{(\text{P})}} e^{-j\frac{2\pi}{\lambda_2}y_{\ell_n}^{(\text{P})}},
\end{align}
where (a) represents amplitude attenuation from the $(i-1)$-th PA to the $i$-th PA, and without loss of generality, we set $y_{0_n}^{(\text{P})} = 0$;
(b) denotes the proportion of power coupled out by the $i$-th antenna;
(c) is the amplitude corresponding to the proportion of power remaining in the waveguide;
(d) accounts for the phase shift introduced after traveling a distance of $y_{\ell_n}^{(\text{P})}$.

The three reorganized multiplicative terms in \eqref{e:H_WP_detail} correspond to equivalent coupling efficiency, waveguide attenuation, and phase shift inside the waveguide, respectively.
For equal quota division, each antenna receives a power share of $1/L$. Therefore, for the equivalent coupling efficiency, we have
\begin{equation}\label{e:coupling_efficiency}
    \prod_{i=1}^{\ell-1}\sqrt{1 - \sin^2(\kappa \tau_{i_n})}
\sin(\kappa \tau_{\ell_n}) = \sqrt{\frac{1}{L}}.
\end{equation}
Substituting \eqref{e:coupling_efficiency} into \eqref{e:H_WP_detail} yields \eqref{e:H_WP}.

Then, we prove \eqref{e:couple_length} by mathematical induction.
For $\ell = 1$, the product term is empty, and the condition reduces to
\begin{equation}
    \sin(\kappa \tau_1) = \sqrt{\frac{1}{L}}.
\end{equation}
Choosing $\tau_1 = \frac{1}{\kappa}\arcsin\sqrt{\frac{1}{L}}$ satisfies this condition.

Assume the condition holds for $\ell = j$, i.e.,
\begin{equation}
    \left(\prod_{i=1}^{j-1}\sqrt{1 - \sin^2(\kappa \tau_i)} \right)\sin(\kappa \tau_j) = \sqrt{\frac{1}{L}}.
\end{equation}
From the inductive hypothesis and the coupling length design, we have
\begin{equation*}
    \sqrt{1 - \sin^2(\kappa \tau_i)} = \sqrt{\frac{L-i}{L+1-i}}.
\end{equation*}

Now consider the case for $\ell = j+1$ with $\tau_{j+1} = \frac{1}{\kappa}\arcsin\sqrt{\frac{1}{L-j}}$:
\begin{eqnarray*}
    \hspace{-1cm} & \left(\prod_{i=1}^{j}\sqrt{1 - \sin^2(\kappa \tau_i)}\right)
    \sin(\kappa \tau_{j+1}) \\
    \hspace{-1cm}&\hspace{2.5cm}= \sqrt{\frac{L-1}{L} \cdot \frac{L-2}{L-1} \cdots \frac{L-j}{L-j+1}} \cdot \sqrt{\frac{1}{L-j}}
    = \sqrt{\frac{1}{L}}.\notag
\end{eqnarray*}

By mathematical induction, the proposed coupling lengths $\tau_\ell = \frac{1}{\kappa}\arcsin\sqrt{\frac{1}{L+1-\ell}}$, $\ell = 1, \dots, L$, satisfy the equal quota division condition for all $\ell$.
\end{proof}

\begin{rem}
   In the ideal case of zero waveguide attenuation ($\alpha_{\text{W}} = 0$), the proposed equal quota division scheme coincides with the conventional equal power division. Under this condition, the model in Proposition~\ref{lem:H_WP} ensures that each PA radiates identical signal power, thereby recovering a common assumption in prior work which neglected attenuation.
\end{rem}

\subsection{PA-to-User Channel}\label{sec:IIIB}
In real-world deployments, a PA functions as a LWA, exhibiting strongly directional radiation. This fact has often been overlooked by the omnidirectional assumption in system-level analyses for mathematical convenience. In a PASS, the radiation is typically achieved by introducing a specific structure (e.g., a ``pinching'' mechanism that perturbs the waveguide) which causes the guided wave to gradually leak energy into free space \cite{wang2025modeling}.
This physical mechanism inherently produces a pencil-like beam concentrated in the forward half-space, a classical result of waveguide antenna theory \cite{jackson2012leaky}.

Furthermore, at mmWave frequencies, LoS paths are highly susceptible to blockage by obstacles. Our channel model in this section explicitly incorporates both a physically accurate directional radiation pattern and a stochastic blockage model to capture these essential effects.

\subsubsection{Directional Radiation Pattern of a PA}
To model the directional radiation of a PA, we adopt the foundational Gaussian beam approximation for the fundamental mode of a rectangular dielectric waveguide \cite{okamoto2021fundamentals}. 
This model is well-established for describing the beam-like radiation from such structures. The electric field pattern, which characterizes the directional gain, can be described in the PA's local coordinate frame. When radiated from the origin and along the positive $y$-axis, the radiation pattern is given by (2.109) in \cite{okamoto2021fundamentals}:
\begin{eqnarray}\label{e:Upsilon}
&& \hspace{-0.8cm} \Upsilon\left( x, y, z \right) = \sqrt{\frac{w_1 w_2}{W_1 W_2}} B \exp \left\{ - \left[ \frac{x^2 }{W_1^2} + \frac{ z^2 }{W_2^2} \right] \right. \\
&& \hspace{0.8cm} \left. - j k n \left[ \frac{ x^2 }{2 R_1} + \frac{ z^2 }{2 R_2} + y \right] + j \frac{\Theta_1 + \Theta_2}{2} \right\}, \notag
\end{eqnarray}
where the parameters of the Gaussian beam $(W_1,W_2)$, $(R_1,R_2)$, and $(\Theta_1,\Theta_2)$ represent the beam radii, radii of curvature, and Gouy phases along the x- and z-axes, respectively:
\begin{itemize}[leftmargin=0.5cm]
    \item The beam radii $W_i = \lambda y / (\pi n w_i)$, $i=1,2$, where $n$ is the refractive index. The initial Gaussian widths satisfy $w_1/(a\lambda) = w_2/(b\lambda) = v$, where $v \approx 1.1$ is an empirical correction factor linked to the normalized waveguide frequency \cite{yeh2008essence}.
    \item Under the far-field approximation, the radii of curvature are taken as $R_1 = R_2 = y$.
    \item The Gouy phases are $\Theta_i = \tan^{-1}\big( \lambda y / (\pi n w_i) \big)$, $i=1,2$. 
    \item The normalization constant $B$ ensures $\iint_{xOz} \allowbreak |f(x,y,z)|^2 \allowbreak dx dz = 1$, yielding $B^2 = 2/(\pi w_1 w_2)$.
\end{itemize}

To utilize this model, we must first determine the user's position within the PA's local coordinate frame, with the PA's boresight aligned to the radiation beam. We define the antenna's orientation by an elevation angle $\theta_{\ell_n} \in (\pi/2, \pi]$ (i.e., the angle between the PA's radiation direction and the positive $z$-axis) and an azimuth angle $\varphi_{\ell_n} \in (-\pi, \pi]$ (i.e., the angle between the projection of this direction onto the $x$–$y$ plane and the positive $x$-axis). Note that $\theta_{\ell_n} = \pi$ corresponds to the antenna pointing vertically downward. The following lemma defines the required coordinate transformation.


\begin{lem}[Coordinate Transformation] \label{lem:psiU_tilde}
The coordinates of the $m$-th user in the local reference frame of the $\ell_n$-th PA, denoted as ${\tilde{\bm{\psi}}_{m,\ell_n}^{(\text{U})}} = \left( {\tilde{x}_{m,\ell_n}^{(\text{U})}}, {\tilde{y}_{m,\ell_n}^{(\text{U})}}, {\tilde{z}_{m,\ell_n}^{(\text{U})}} \right)$, are given by
\begin{equation}\label{e:psiU}
\hspace{-0.2cm}{\tilde{\bm{\psi}}_{m,\ell_n}^{(\text{U})}} = \bm{R}_x \left( \theta_{\ell_n}  \!-\! \frac{\pi}{2} \right) \bm{R}_z \left( \frac{\pi}{2} \!-\! \varphi_{\ell_n}   \right) \left( \bm{\psi}_m^{(\text{U})} \!-\! \bm{\psi}_{\ell_n}^{(\text{P})}  \right),
\end{equation}
where
$$\bm{R}_z(x) \!=\!\! \begin{bmatrix} \cos{x} & \!\!\!-\sin{x} & \!\!\!0 \\ \sin{x} & \!\!\!\cos{x} & \!\!\!0 \\ 0 & \!\!\!0 & \!\!\!1 \end{bmatrix} \text{ and }~ \bm{R}_x(x) \!=\!\! \begin{bmatrix} 1 & \!\!\!0 & \!\!\!0 \\ 0 & \!\!\!\cos{x} & \!\!\!-\sin{x} \\ 0 & \!\!\!\sin{x} & \!\!\!\cos{x} \end{bmatrix}$$ 
are the rotation matrices around the z-axis and x-axis, respectively.
\end{lem}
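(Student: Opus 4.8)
The plan is to construct the transformation \eqref{e:psiU} from two defining requirements of the PA's local frame: (i) it is centered at the PA's radiation point $\bm{\psi}_{\ell_n}^{(\text{P})}$, and (ii) its boresight — the direction along which the Gaussian beam $\Upsilon$ of \eqref{e:Upsilon} propagates, i.e.\ the local $+y$-axis — must coincide with the physical radiation direction prescribed by $(\theta_{\ell_n},\varphi_{\ell_n})$. From the stated conventions (elevation measured from the positive $z$-axis, azimuth measured in the $x$–$y$ plane from the positive $x$-axis), the radiation direction expressed in the global frame is the unit vector $\hat{\bm{d}}_{\ell_n} = (\sin\theta_{\ell_n}\cos\varphi_{\ell_n},\ \sin\theta_{\ell_n}\sin\varphi_{\ell_n},\ \cos\theta_{\ell_n})$. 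The proof then reduces to exhibiting the rigid motion (a translation followed by a rotation) that maps $\bm{\psi}_{\ell_n}^{(\text{P})}\mapsto \bm{0}$ and $\hat{\bm{d}}_{\ell_n}\mapsto (0,1,0)$, and evaluating it at $\bm{\psi}_m^{(\text{U})}$.

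First I would translate by $-\bm{\psi}_{\ell_n}^{(\text{P})}$, replacing the user's global coordinates by the displacement $\bm{\psi}_m^{(\text{U})}-\bm{\psi}_{\ell_n}^{(\text{P})}$ while keeping the axes parallel to the global ones. Next I would realize the required rotation as the composition in \eqref{e:psiU}: applying $\bm{R}_z(\pi/2-\varphi_{\ell_n})$ first rotates about the $z$-axis so that the azimuth of the radiation direction becomes $\pi/2$; a one-line computation with the angle-addition identities gives $\bm{R}_z(\pi/2-\varphi_{\ell_n})\hat{\bm{d}}_{\ell_n} = (0,\ \sin\theta_{\ell_n},\ \cos\theta_{\ell_n})$, so the beam axis now lies in the $y$–$z$ plane. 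Applying $\bm{R}_x(\theta_{\ell_n}-\pi/2)$ then tilts about the $x$-axis; using $\cos(\theta_{\ell_n}-\pi/2)=\sin\theta_{\ell_n}$ and $\sin(\theta_{\ell_n}-\pi/2)=-\cos\theta_{\ell_n}$ one obtains $\bm{R}_x(\theta_{\ell_n}-\pi/2)(0,\sin\theta_{\ell_n},\cos\theta_{\ell_n}) = (0,1,0)$. Composing the translation with $\bm{R}_x(\theta_{\ell_n}-\pi/2)\bm{R}_z(\pi/2-\varphi_{\ell_n})$ and evaluating at the user's position yields exactly \eqref{e:psiU}. As a sanity check, $\theta_{\ell_n}=\pi$ collapses the rotation to $\bm{R}_x(\pi/2)\bm{R}_z(\pi/2-\varphi_{\ell_n})$, which sends the downward direction $(0,0,-1)$ to $(0,1,0)$, consistent with the ``pointing vertically downward'' interpretation.

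The routine part is the two $3\times3$ matrix–vector products and the attendant trigonometric simplifications. The one genuine subtlety — the step I expect to require the most care — is pinning down the correct order of the two rotations and the correct signs of their arguments: the azimuth alignment must precede the elevation tilt (hence $\bm{R}_x$ appears to the left of $\bm{R}_z$ when acting on column vectors), and the argument $\pi/2-\varphi_{\ell_n}$ rather than $\varphi_{\ell_n}-\pi/2$ is forced by the fact that the beam leaves along $+y$ in the local frame while the azimuth is referenced to $+x$ in the global frame. To remove any ambiguity I would check not only $\hat{\bm{d}}_{\ell_n}\mapsto(0,1,0)$ but also the images of the global $x$- and $z$-unit vectors under $\bm{R}_x(\theta_{\ell_n}-\pi/2)\bm{R}_z(\pi/2-\varphi_{\ell_n})$, verifying that they agree with the physical mounting of the rectangular waveguide cross-section so that the beam-width parameters $(W_1,W_2)$ in \eqref{e:Upsilon} are correctly paired with the transformed $\tilde{x}$- and $\tilde{z}$-coordinates.
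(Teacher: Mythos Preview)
Your proposal is correct and essentially matches the paper's approach: both arguments identify the boresight direction $\hat{\bm{d}}_{\ell_n}=(\sin\theta_{\ell_n}\cos\varphi_{\ell_n},\sin\theta_{\ell_n}\sin\varphi_{\ell_n},\cos\theta_{\ell_n})$ and show that the composition $\bm{R}_x(\theta_{\ell_n}-\pi/2)\bm{R}_z(\pi/2-\varphi_{\ell_n})$ carries it to $(0,1,0)$. The only stylistic difference is that the paper first builds the \emph{forward} map $(0,1,0)\mapsto\hat{\bm{d}}_{\ell_n}$ as $\bm{R}_z(\varphi_{\ell_n}-\pi/2)\bm{R}_x(\pi/2-\theta_{\ell_n})$ and then inverts, whereas you verify the inverse transformation directly; your extra checks on the remaining axes and the $\theta_{\ell_n}=\pi$ sanity test go slightly beyond what the paper records but do not constitute a different route.
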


\begin{proof}
The PA's directional vector $\bm{n}_{\ell_n} = (\sin\theta_{\ell_n} \allowbreak \cos\varphi_{\ell_n}, \allowbreak \sin\theta_{\ell_n} \allowbreak \sin\varphi_{\ell_n}, \allowbreak  \cos \theta_{\ell_n})$ can be obtained by rotating the reference vector $(0,1,0)$ via two successive rotations: first by a rotation of $(\frac{\pi}{2} - \theta_{\ell_n}) $ around the $x$-axis in the right-hand screw direction, then by a rotation of $(\varphi_{\ell_n}  - \frac{\pi}{2})$ around the $z$-axis in the right-hand screw direction. This is expressed as
\begin{equation}
\bm{n}_{\ell_n}  \left( \varphi_{\ell_n} , \theta_{\ell_n}  \right)  = \bm{R}_z  \left( \varphi_{\ell_n}  - \frac{\pi}{2} \right) \bm{R}_x \left( \frac{\pi}{2} - \theta_{\ell_n}  \right) \begin{bmatrix} 0 \\ 1 \\ 0 \end{bmatrix}.
\end{equation}
Then, for any point in space, the transformation from global coordinates to local reference frame (i.e., relative coordinates) can be seen as an inverse operation, i.e., first a rotation of $-\left( \varphi_{\ell_n}  - \frac{\pi}{2} \right)$ around the $z$-axis, followed by a rotation of $-\left( \frac{\pi}{2} - \theta_{\ell_n}  \right)$ around the $x$-axis. This is given by \eqref{e:psiU}.
\end{proof}

\subsubsection{Integrated Channel Gain}
With the user's relative coordinates established, we can now formulate the PA-to-user channel model.

\begin{prop}[PA-to-User Channel]\label{prop:H_PU}
The $m$-th row and $\ell_n$-th column element of $\bm{H}^{(\text{P} \to \text{U})}$, denoted by $h_{m,\ell_n}^{(\text{P} \to \text{U})}$, represents the channel coefficient from the $\ell_n$-th antenna to the $m$-th user, and is given by
\begin{equation}
h_{m,\ell_n}^{(P \to U)} = \sqrt{\eta} \alpha_{\text{L}}^{|\bm{\psi}_m^{(\text{U})} - \bm{\psi}_{\ell_n}^{(\text{P})}|} \Upsilon\left( {\tilde{x}_{m,\ell_n}^{(\text{U})}}, {\tilde{y}_{m,\ell_n}^{(\text{U})}}, {\tilde{z}_{m,\ell_n}^{(\text{U})}} \right),
\end{equation}
where $\eta = \frac{\lambda^2}{4\pi}$,
$\alpha_{\text{L}}$ denotes the LoS existence coefficient per meter, $\Upsilon$ is the directional radiation pattern of the PA defined in \eqref{e:Upsilon}, and ${\tilde{x}_{m,\ell_n}^{(\text{U})}}$, ${\tilde{y}_{m,\ell_n}^{(\text{U})}}$, ${\tilde{z}_{m,\ell_n}^{(\text{U})}}$ are the coordinates of the $m$-th user relative to the $\ell_n$-th antenna defined in Lemma \ref{lem:psiU_tilde}.
\end{prop}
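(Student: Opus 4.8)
The plan is to derive $h_{m,\ell_n}^{(\text{P}\to\text{U})}$ by composing three physically independent effects: (i) the directional field radiated by the PA, captured by the Gaussian-beam pattern $\Upsilon$; (ii) the conversion of the incident power flux into a received channel gain at the single-antenna user, which supplies the factor $\sqrt{\eta}$; and (iii) the stochastic LoS blockage along the PA--user link, which supplies the factor $\alpha_{\text{L}}^{|\bm{\psi}_m^{(\text{U})}-\bm{\psi}_{\ell_n}^{(\text{P})}|}$. Since $\bm{H}^{(\text{P}\to\text{U})}$ has already been defined entrywise as the PA-to-user coefficient, the task is to justify the closed form on the right-hand side as the product of these three terms.

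First I would invoke the Gaussian-beam model \eqref{e:Upsilon}: a PA that leaks unit guided power fully into free space radiates, in its own local frame with boresight along $+y$, the field $\Upsilon(x,y,z)$ normalized so that $\iint_{xOz}|\Upsilon|^2\,dx\,dz = 1$ on every transverse plane. This is the statement of power conservation along the beam, and it makes $|\Upsilon(x,y,z)|^2$ the power flux density at $(x,y,z)$; the constant $B^2 = 2/(\pi w_1 w_2)$ follows from a one-line Gaussian integral of the squared prefactor. Substituting $W_i = \lambda y/(\pi n w_i)$ shows the prefactor $\sqrt{w_1 w_2/(W_1 W_2)}$ scales as $1/y$, so $|\Upsilon|^2$ already contains the $\sim 1/y^2$ spreading loss, and no separate free-space path-loss factor is appended. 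To evaluate this field at the user I would apply Lemma~\ref{lem:psiU_tilde} to map $\bm{\psi}_m^{(\text{U})}$ into the PA's local coordinates $\big(\tilde{x}_{m,\ell_n}^{(\text{U})},\tilde{y}_{m,\ell_n}^{(\text{U})},\tilde{z}_{m,\ell_n}^{(\text{U})}\big)$, which is exactly the argument of $\Upsilon$ in the claim.

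Next I would convert the incident flux density to a channel coefficient. Modeling the user as an isotropic single-antenna receiver, its effective aperture is $A_{\text{eff}} = \lambda^2/(4\pi) = \eta$, so the received power equals $\eta\,|\Upsilon|^2$ per unit transmit power, and the amplitude channel coefficient picks up $\sqrt{\eta}$. Then, for blockage, I would model the LoS link as surviving with an amplitude factor that compounds multiplicatively per meter: over the Euclidean distance $d = |\bm{\psi}_m^{(\text{U})}-\bm{\psi}_{\ell_n}^{(\text{P})}|$ the LoS contribution is scaled by $\alpha_{\text{L}}^{d}$, consistent with a spatially homogeneous field of independent obstacles along the link. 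Multiplying $\sqrt{\eta}$, $\alpha_{\text{L}}^{d}$, and $\Upsilon\big(\tilde{x}_{m,\ell_n}^{(\text{U})},\tilde{y}_{m,\ell_n}^{(\text{U})},\tilde{z}_{m,\ell_n}^{(\text{U})}\big)$ yields the stated expression.

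The main obstacle is bookkeeping precision rather than any hard computation: one must verify that the $1/d$ amplitude decay lives entirely inside $\Upsilon$ (so Friis spreading is not double-counted), fix the normalization convention that makes $|\Upsilon|^2$ a flux density compatible with the aperture $\eta$, and state unambiguously that $\alpha_{\text{L}}^{d}$ multiplies the amplitude (not the power) coefficient. Once these conventions are pinned down, the result is simply the product of the three factors together with the substitution from Lemma~\ref{lem:psiU_tilde}.
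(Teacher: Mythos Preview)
Your proposal is correct and follows essentially the same approach as the paper: both justify the formula by decomposing it into the three multiplicative factors $\sqrt{\eta}$ (isotropic receive aperture), $\alpha_{\text{L}}^{d}$ (exponential LoS blockage), and $\Upsilon$ (directional Gaussian-beam pattern evaluated in the local frame via Lemma~\ref{lem:psiU_tilde}). Your bookkeeping is in fact more careful than the paper's brief argument, particularly your explicit check that the $1/y^2$ spreading already lives inside $\Upsilon$ so that no separate Friis loss is double-counted.
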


\begin{proof}
The model constructs the channel coefficient by capturing three physical phenomena: free-space propagation, directional radiation, and stochastic LoS blockage. The term $\sqrt{\eta}$ is fundamental to electromagnetic propagation and represents the effective aperture of an isotropic receiver, accounting for the basic spatial attenuation of the wavefront in free space.
For mmWave frequencies, the presence of a LoS path is probabilistic. Modeling independent blocking events leads to an exponential decay of the LoS probability with distance. Hence, we incorporate the factor $\alpha_{\text{L}}^{|\bm{\psi}_m^{(\text{U})} - \bm{\psi}_{\ell_n}^{(\text{P})}|}$, where $\alpha_{\text{L}}$ is the existence coefficient per unit distance. 
Finally, the $\Upsilon$ term incorporates the directional gain of the PA, which is the core physical feature emphasized in this work.
\end{proof}

\subsubsection{Characteristics and System-Level Implications}
The highly directional nature of the PA radiation is quantified by its beam divergence angles. In the Fraunhofer region, the divergence angles of the radiation field along the $x$- and $z$-directions are expressed as \cite{okamoto2021fundamentals}
\begin{equation}
    \theta_x = \arctan\!\left(\frac{\lambda}{\pi n w_1}\right), \quad
    \theta_z = \arctan\!\left(\frac{\lambda}{\pi n w_2}\right).
\end{equation}
By substituting typical parameters $n = 1.5$, $w_1 = v a \lambda = 1.1 \times 10 \times 10^{-3}$, and 
$w_2 = v b \lambda = 1.1 \times 6 \times 10^{-3}$, 
the corresponding divergence angles are calculated to be 
$\theta_x = 1.11^\circ$ and $\theta_z = 1.84^\circ$. 
This results in an extremely narrow beam. 
For instance, at a distance of $3$~m, the half-power beamwidth therefore covers a region with a diameter of only about $0.1$~m.

\begin{figure}[!tb]
\centering
    \begin{subfigure}[!tb]{1\linewidth}
        \centering
        \includegraphics[width=.8\linewidth]{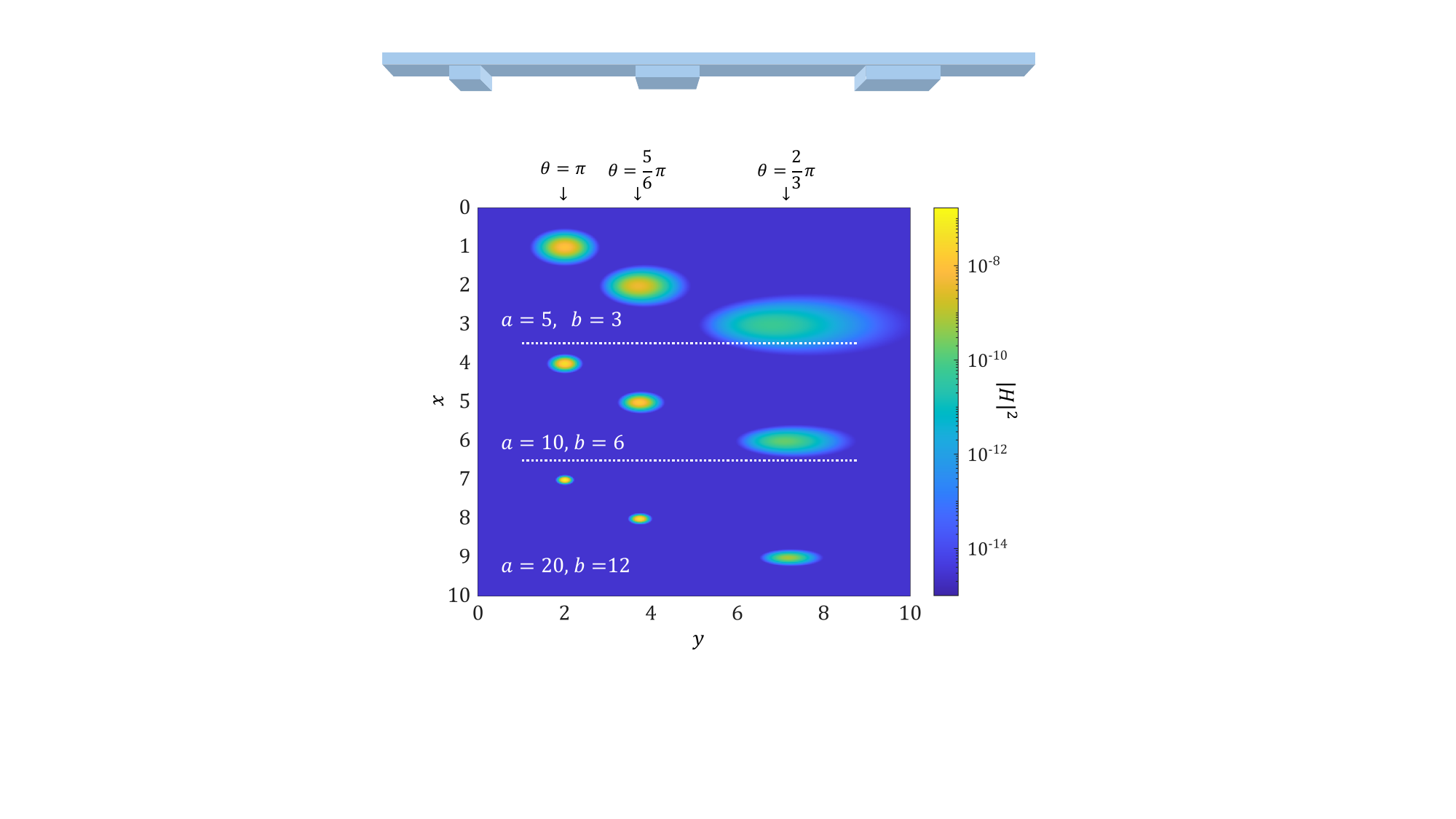}
        \caption{}\label{f:sumrate_M2}
    \end{subfigure}
    \begin{subfigure}[!tb]{1\linewidth}
        \centering
        \includegraphics[width=.8\linewidth]{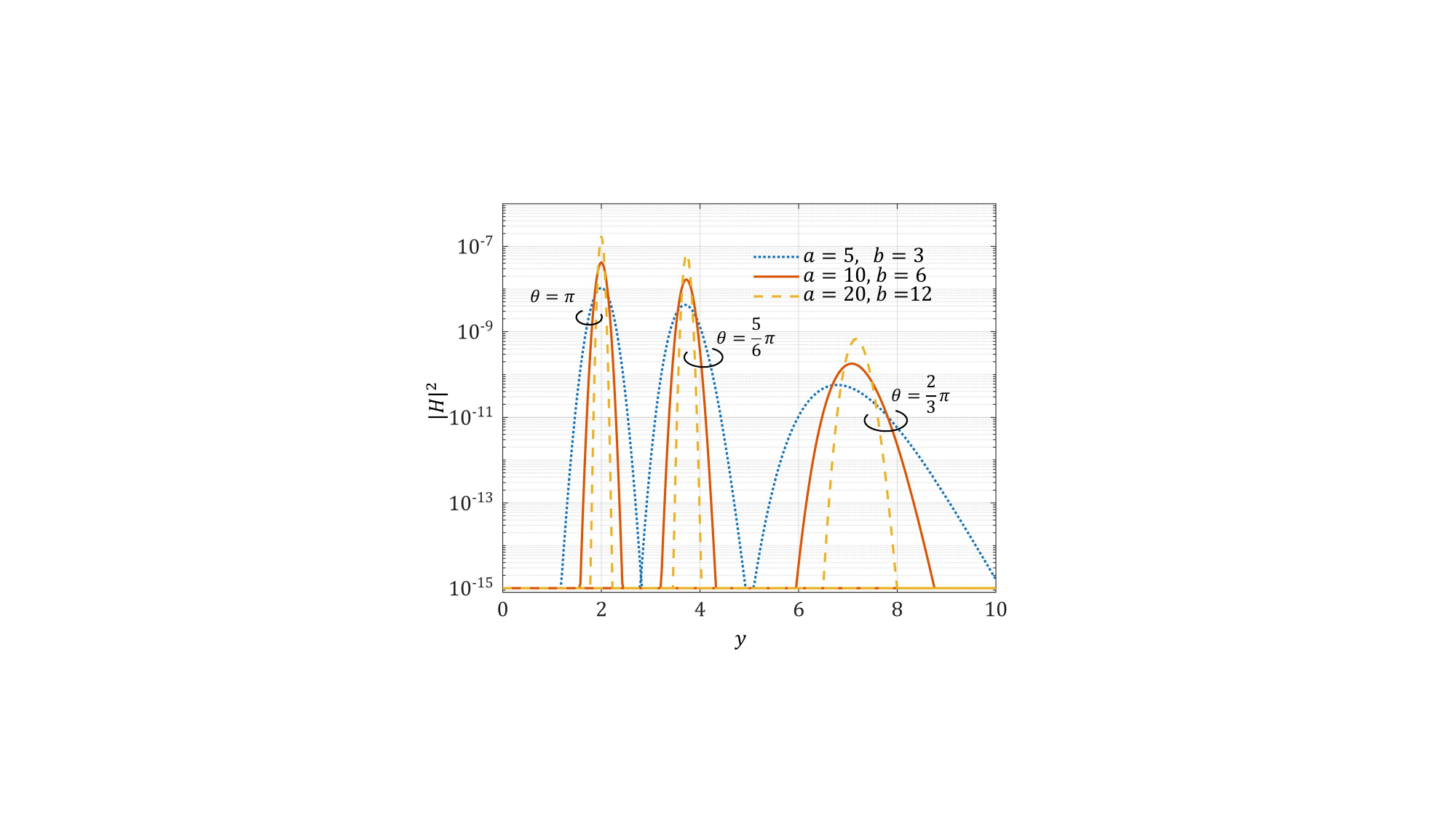}
        \caption{}\label{f:sumrate_M5}
    \end{subfigure}
\caption{Radiation patterns for a set of antenna configurations with varying cross-sectional dimensions ($5\lambda \times 3\lambda$, $10\lambda \times 6\lambda$, and $20\lambda \times 12\lambda$) and elevation angles ($\theta = \pi$, $\tfrac{5}{6}\pi$, and $\tfrac{2}{3}\pi$):
(a) spatial distribution of the power gain $|H|^2$ on the $z = 0$ plane;
(b) variation of the power gain $|H|^2$ along the line aligned with the waveguide axis.}
\label{f:directional}
\end{figure}

To concretely illustrate the extremely limited coverage area of a single PA, we present in Fig.~\ref{f:directional} the radiation patterns for a set of antenna configurations with varying cross-sectional dimensions: $5\lambda \times 3\lambda$, $10\lambda \times 6\lambda$, and $20\lambda \times 12\lambda$. 
The corresponding elevation angles are set to $\pi$, $\tfrac{5}{6}\pi$, and $\tfrac{2}{3}\pi$, respectively. 
All antennas are positioned at $y = 2$ with a fixed azimuth angle of $\tfrac{\pi}{2}$. 
Fig.~\ref{f:directional}(a) presents the spatial distribution of the power gain $|h^{(W \to P)}h^{(P \to U)}|^2$ on the $z = 0$ plane, whereas Fig.~\ref{f:directional}(b) presents the variation of the power gain magnitude along the line aligned with the waveguide axis. 

A key observation is that the effective coverage area of a single PA is highly confined. Even with the smallest simulated cross-section $5\lambda \times 3\lambda$, the effective coverage diameter is less than $1.5$~m. This area further shrinks to approximately $0.5$~m for a larger $20\lambda \times 12\lambda$ cross-section. 
This is a dramatic departure from the omnidirectional model, which assumes uniform radiation in all directions and would significantly overestimate the coverage and underestimate the potential for inter-user interference.
This fundamental property justifies a key system-level assumption in our subsequent analysis: in environments with a user density not exceeding $1~\text{user}/\text{m}^2$, each PA can be assumed to serve only a single user, thereby naturally minimizing inter-user interference.


\section{DiPASS Optimization and Analytical Insights}\label{sec:IV}
The physically consistent model developed in Sections~\ref{sec:II} and~\ref{sec:III} establishes a foundational framework for various PASS-enabled applications. In this section, we focus on the application of PASS in wireless communication and address the fundamental problem of sum-rate maximization. The overall optimization problem is formulated as follows:
\begin{subequations}
\begin{align}
\text{(P1):} 
\max_{\{\!y_{\ell_n}^{(\text{P})}\!\},\{\!\theta_{\ell_n}\!\}, \{\!\varphi_{\ell_n}\!\}, \bm{W}} \!\!& r \! = \! \sum_{m=1}^{M} \! \log_2\! \Big( \!1 \!+\! \frac{|S_m|^2}{|I_m|^2 \!+\! \sigma_m^2} \!\Big), \label{e:P1a}\\
\text{s.t.}
\hspace{2.7 cm} 
 & y_{1_n}^{(\text{P})} > 0, \label{e:P1d}\\
& y_{(\ell+1)_n}^{(\text{P})} - y_{\ell_n}^{(\text{P})} \geq  \frac{\lambda_2}{2}, \label{e:P1e}\\
& y_{L_n}^{(\text{P})} \leq V_y ,\label{e:P1f}\\
& \frac{\pi}{2} \leq \theta_{\ell_n} \leq ~ \pi, \label{e:P1b}\\
&-\pi <  \varphi_{\ell_n} \leq \pi, \label{e:P1c}\\
&\text{tr}\left( \bm{W}^\top \bm{W} \right) = 1 .\label{e:P1w}
\end{align}
\end{subequations}
This formulation jointly optimizes the PA positions $y_{\ell_n}^{(\text{P})}$, orientations $\theta_{\ell_n}$, $\varphi_{\ell_n}$, and the precoding matrix $\bm{W}$ to maximize the sum rate achievable under our physically consistent model. The constraints ensure practical implementation: \eqref{e:P1d} ensures positive placement, \eqref{e:P1e} maintains minimum antenna spacing to avoid inter-PA coupling, \eqref{e:P1f} respects the waveguide length limitation, and \eqref{e:P1b}-\eqref{e:P1c} bound the orientation angles.

Through this optimization framework, we aim to uncover fundamental insights that distinguish our modeling approach from prior work. Specifically, we seek to answer: \textit{what new design principles emerge when accounting for physical effects (i.e., directional radiation, waveguide attenuation, and stochastic blockage) that were oversimplified or ignored in previous analyses?} The solutions to (P1) will reveal how these physical factors collectively influence optimal PASS deployment strategies and potentially challenge conclusions drawn under idealized assumptions.

\subsection{Fundamental Insights from Single-PA Analysis}\label{sec:IVA}
To gain fundamental insights into the competing physical effects, we first analyze the most basic scenario with a single user and a single PA, i.e., $N=L=M=1$. This setup allows for a closed-form analysis that reveals the core trade-offs governing PASS performance. For notational simplicity, we omit all subscripts in this subsection, and (P1) reduces to
\begin{subequations}
\begin{align}
\hspace{-2cm}\text{(P2):}\hspace{3.5cm}& \hspace{-2.5cm}
\max_{\{y^{(\text{P})}\},~\{\theta\},~\{\varphi\}}~
r, \label{e:2a}\\
\hspace{-2cm}\text{s.t.}\hspace{3.7cm}
& \hspace{-2.3cm} \eqref{e:P1d}, \eqref{e:P1f}, \eqref{e:P1b}, \eqref{e:P1c}.
\label{e:P2b}
\end{align}
\end{subequations}

With a single PA and user, neither phase superposition nor interference occurs. Consequently, the sum-rate optimization can be equivalently written as
\begin{align*}
\max r
\Longleftrightarrow & \max \log_2 \left( 1 + \frac{ \left| \sqrt{P} H^{(\text{P} \to \text{U})} H^{(\text{W} \to \text{P})} \right|^2 }{ \sigma^2 } \right) \quad\\
\Longleftrightarrow & \max |H|^2 \triangleq \left| H^{(\text{P} \to \text{U})} H^{(\text{W} \to \text{P})} \right|^2.
\end{align*}

Substituting the channel expressions $H^{(P \to U)}$ and $H^{(W \to P)}$ from Section~\ref{sec:III} yields
\begin{eqnarray}\label{e:H_2}
|H|^2 &=& \left| H^{(\text{P} \to \text{U})} H^{(\text{W} \to \text{P})} \right|^2 \nonumber \\
&=& e^{ -\alpha_{\text{W}} y^{(\text{P})} } \alpha_{\text{L}}^{ 2\sqrt{ \left( {\tilde{x}^{(\text{U})}} \right)^2 + \left( {\tilde{y}^{(\text{U})}} \right)^2 + \left( {\tilde{z}^{(\text{U})}} \right)^2 } } \frac{ n^2 v^2 a b \lambda^2 }{ 2 \left( {\tilde{y}^{(\text{U})}} \right)^2 } \nonumber \\
\hspace{-1cm}&& \cdot \exp\left\{ -2 (\pi n v)^2 \frac{ \left( a {\tilde{x}^{(\text{U})}} \right)^2 + \left( b {\tilde{z}^{(\text{U})}} \right)^2 }{ \left( {\tilde{y}^{(\text{U})}} \right)^2 } \right\}.
\end{eqnarray}

An illustration of $|H|^2$ versus the PA position $y^{\text{P}}$ is given in Fig.~\ref{fig:Hp_y}, wherein users are positioned at $y^{(\text{U})} = 2$, $5$, and $8$, respectively. For $y^{(\text{U})} = 2$, $|H|^2$ decreases monotonically as the PA moves leftward from $y = 0$, and the rate of decrease transitions from gradual to exponential. In contrast, when $y^{(\text{U})} = 5$ or $8$, $|H|^2$ initially increases before subsequently declining. These results reveal the existence of an optimal PA position that maximizes $|H|^2$. Nevertheless, as indicated by the expression in \eqref{e:H_2}, deriving this optimal position analytically remains a nontrivial task. 
To circumvent this challenge, we employ the logarithmic function to construct a positively correlated term $\ln |H|^2$ shown as 
\begin{eqnarray}\label{e:ln_H_2}
&&\hspace{-1cm}\ln {|H|^2} 
= - \alpha_{\text{W}} y^{(\text{P})} + 2\ln(\alpha_{\text{L}})\\
&&  \cdot \sqrt{ \left( {\tilde{x}^{(\text{U})}} \right)^2 + \left( {\tilde{y}^{(\text{U})}} \right)^2 + \left( {\tilde{z}^{(\text{U})}} \right)^2 } + \ln\left( n^2 v^2 a b \lambda^2 \right) \notag \\
&&  - 2 \ln\left( \sqrt{2} {\tilde{y}^{(\text{U})}} \right)
- 2 (\pi n v)^2 \frac{ \left( a {\tilde{x}^{(\text{U})}} \right)^2 + \left( b {\tilde{z}^{(\text{U})}} \right)^2 }{ \left( {\tilde{y}^{(\text{U})}} \right)^2 }, \notag
\end{eqnarray}
which transforms the multiplicative relationship into an additive form and facilitates analysis.

To determine the optimal $y^{(\text{P})}$, $\theta$, and $\varphi$, we observe that the local coordinates ${\tilde{x}}$, ${\tilde{y}}$, ${\tilde{z}}$ are analytical functions of these optimization variables. We leverage the chain rule via the Jacobian matrix $\bm{J}$:
\begin{eqnarray}
\begin{bmatrix} \frac{\partial \ln |H|^2}{\partial y^{(\text{P})}} \\ \frac{\partial \ln |H|^2}{\partial \theta} \\ \frac{\partial \ln |H|^2}{\partial \varphi} \end{bmatrix} \hspace{-0.2cm}&=&\hspace{-0.2cm} \begin{bmatrix}
\frac{\partial {\tilde{x}}}{\partial y^{(\text{P})}} & \frac{\partial {\tilde{y}}}{\partial y^{(\text{P})}} & \frac{\partial {\tilde{z}}}{\partial y^{(\text{P})}} \\
\frac{\partial {\tilde{x}}}{\partial \theta} & \frac{\partial {\tilde{y}}}{\partial \theta} & \frac{\partial {\tilde{z}}}{\partial \theta} \\
\frac{\partial {\tilde{x}}}{\partial \varphi} & \frac{\partial {\tilde{y}}}{\partial \varphi} & \frac{\partial {\tilde{z}}}{\partial \varphi}
\end{bmatrix}
\cdot \begin{bmatrix} \frac{\partial \ln |H|^2}{\partial {\tilde{x}}} \\ \frac{\partial \ln |H|^2}{\partial {\tilde{y}}} \\ \frac{\partial \ln |H|^2}{\partial {\tilde{z}}} \end{bmatrix}  \notag\\
\hspace{-0.2cm}&\triangleq&\hspace{-0.2cm} \bm{J} \cdot \begin{bmatrix} \frac{\partial \ln |H|^2}{\partial {\tilde{x}}} , \frac{\partial \ln |H|^2}{\partial {\tilde{y}}} , \frac{\partial \ln |H|^2}{\partial {\tilde{z}}} \end{bmatrix}^\top.
\end{eqnarray}

From Lemma \ref{lem:psiU_tilde}, we can obtain the explicit expression of $\bm{J}$, which is  full-rank. Therefore,
\begin{equation*}
\begin{bmatrix} \frac{\partial \ln |H|^2}{\partial y^{(\text{P})}} \\ \frac{\partial \ln |H|^2}{\partial \theta} \\ \frac{\partial \ln |H|^2}{\partial \varphi} \end{bmatrix} = \begin{bmatrix} 0 \\ 0 \\ 0 \end{bmatrix} \Leftrightarrow \begin{bmatrix} \frac{\partial \ln |H|^2}{\partial {\tilde{x}}} \\ \frac{\partial \ln |H|^2}{\partial {\tilde{y}}} \\ \frac{\partial \ln |H|^2}{\partial {\tilde{z}}} \end{bmatrix} = \begin{bmatrix} 0 \\ 0 \\ 0 \end{bmatrix}.
\end{equation*}

This allows us to solve for the optimal orientation and position separately.

\begin{figure}[!tb]
    \centering
    \includegraphics[width=0.8\linewidth]{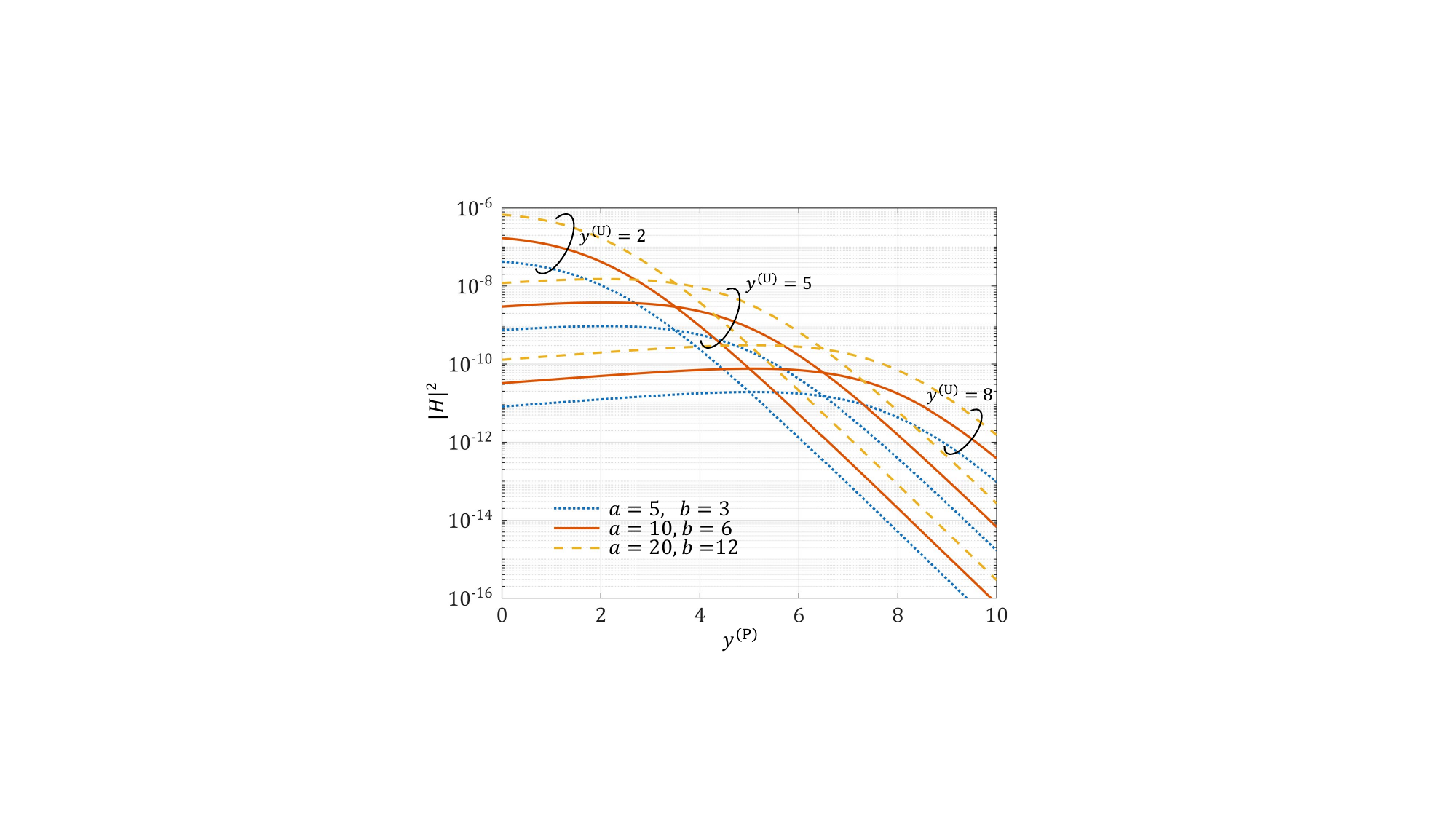}
    \caption{ Variation of the power gain $|H|^2$ with respect to the PA position $y^{(\text{P})}$ under different waveguide cross-sectional dimensions and user positions.}
    \label{fig:Hp_y}
\end{figure}


\begin{lem}[Optimal PA Orientation]\label{thm:single_angle}
The optimal orientation angles that maximize the sum rate in (P2) are given by
\begin{equation} \label{e:theta}
    \theta^* = \arctan\left( \frac{ \sqrt{ \left( y^{(\text{U})} - y^{(\text{P})} \right)^2 + \left( x^{(\text{U})} - x^{(\text{P})} \right)^2 } }{ z^{(\text{U})} - z^{(\text{P})} } \right),
\end{equation}
\begin{equation} \label{e:phi}
\varphi^* = \arctan\left( \frac{ y^{(\text{U})} - y^{(\text{P})} }{ x^{(\text{U})} - x^{(\text{P})} } \right),
\end{equation}
in which case the local coordinates are
\begin{equation*}
    \tilde{x}^{(\text{U})} = \tilde{z}^{(\text{U})} = 0,~~
    \tilde{y}^{(\text{U})} =\left| \bm{\psi}^{(\text{U})} - \bm{\psi}^{(\text{P})} \right|.
\end{equation*}

\end{lem}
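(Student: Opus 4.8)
The plan is to decouple the orientation from the position and reduce the orientation subproblem to the maximization of a one-variable function. The key observation is that rotations preserve Euclidean norms, so for any fixed PA position the distance $\rho := |\bm{\psi}^{(\text{U})} - \bm{\psi}^{(\text{P})}| = \sqrt{(\tilde{x}^{(\text{U})})^2 + (\tilde{y}^{(\text{U})})^2 + (\tilde{z}^{(\text{U})})^2}$ is invariant under the coordinate transformation of Lemma~\ref{lem:psiU_tilde}. Consequently, in \eqref{e:ln_H_2} the summands $-\alpha_{\text{W}} y^{(\text{P})}$, $2\ln(\alpha_{\text{L}})\,\rho$, and $\ln(n^2 v^2 ab\lambda^2)$ do not depend on $(\theta,\varphi)$ at all, and — using the Jacobian-based separation of position and orientation set up above — maximizing the rate over the orientation amounts to maximizing
$$T(\tilde{x},\tilde{y},\tilde{z}) \;=\; -2\ln\!\big(\sqrt{2}\,\tilde{y}\big)\;-\;2(\pi n v)^2\,\frac{a^2\tilde{x}^2 + b^2\tilde{z}^2}{\tilde{y}^2}$$
over the forward hemisphere $\mathcal{S} := \{\,\tilde{x}^2+\tilde{y}^2+\tilde{z}^2 = \rho^2,\ \tilde{y}>0\,\}$ (outside of which the Gaussian-beam model \eqref{e:Upsilon} is not valid). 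Crucially, the point $(0,\rho,0)\in\mathcal{S}$ is attained by a feasible orientation: it corresponds to the PA boresight $\bm{n}(\theta,\varphi)$ coinciding with the unit vector $\bm{d}/\rho$, where $\bm{d} := \bm{\psi}^{(\text{U})} - \bm{\psi}^{(\text{P})}$.

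Next I would show that $(0,\rho,0)$ is the unique maximizer of $T$ on $\mathcal{S}$ via a one-dimensional envelope. On $\mathcal{S}$ we have $a^2\tilde{x}^2 + b^2\tilde{z}^2 \ge \min(a,b)^2(\tilde{x}^2+\tilde{z}^2) = \min(a,b)^2(\rho^2-\tilde{y}^2)$, hence
$$T \;\le\; \Phi(\tilde{y}) \;:=\; -2\ln(\sqrt{2}\,\tilde{y}) - 2(\pi n v)^2\min(a,b)^2\Big(\tfrac{\rho^2}{\tilde{y}^2} - 1\Big),\qquad \tilde{y}\in(0,\rho].$$
A short computation gives $\Phi'(\tilde{y}) = \tfrac{2}{\tilde{y}^3}\big(2(\pi n v)^2\min(a,b)^2\rho^2 - \tilde{y}^2\big)$, which is strictly positive on $(0,\rho]$ because $\sqrt{2}\,\pi n v\min(a,b) > 1$ for any physically meaningful cross-section. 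Thus $\Phi$ is strictly increasing and $T \le \Phi(\tilde{y}) \le \Phi(\rho) = -2\ln(\sqrt{2}\,\rho)$. For equality, the strict monotonicity of $\Phi$ forces $\tilde{y}=\rho$, which on $\mathcal{S}$ forces $\tilde{x}=\tilde{z}=0$, and at that point the first inequality is trivially tight. Hence $\ln|H|^2$, and therefore the sum rate of (P2), is maximized exactly when $\tilde{x}^{(\text{U})}=\tilde{z}^{(\text{U})}=0$ and $\tilde{y}^{(\text{U})}=\rho=|\bm{\psi}^{(\text{U})}-\bm{\psi}^{(\text{P})}|$, which is the local-coordinate part of the claim.

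It then remains to convert $\tilde{\bm{\psi}}^{(\text{U})}=(0,\rho,0)^\top$ back into angles. Inverting \eqref{e:psiU} with $\bm{R}_i(x)^{-1}=\bm{R}_i(-x)$ gives $\bm{d} = \bm{R}_z(\varphi-\tfrac{\pi}{2})\,\bm{R}_x(\tfrac{\pi}{2}-\theta)\,(0,\rho,0)^\top = \rho\,(\sin\theta\cos\varphi,\ \sin\theta\sin\varphi,\ \cos\theta)^\top$, i.e.\ the boresight must point straight at the user. Matching components yields $\tan\varphi = (y^{(\text{U})}-y^{(\text{P})})/(x^{(\text{U})}-x^{(\text{P})})$, which is \eqref{e:phi}, together with $\cos\theta = (z^{(\text{U})}-z^{(\text{P})})/\rho$ and $\sin\theta = \sqrt{(x^{(\text{U})}-x^{(\text{P})})^2+(y^{(\text{U})}-y^{(\text{P})})^2}/\rho$, whose quotient is \eqref{e:theta}. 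Since $z^{(\text{U})}-z^{(\text{P})} = -D_z < 0$, the resulting $\theta^* \in (\pi/2,\pi]$, so the optimum indeed satisfies the feasibility constraints \eqref{e:P1b}–\eqref{e:P1c}.

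The step requiring genuine care is the envelope argument: $T$ is not concave, $\tilde{y}$ sits both inside a logarithm and in a denominator, and a naive Lagrange-multiplier treatment on $\mathcal{S}$ spawns spurious candidates (e.g.\ $\tilde{x}=0$ with $\tilde{z}\neq0$, and symmetrically). Replacing the non-concave two-variable problem with the monotonicity of the scalar $\Phi$ disposes of all of them at once and also absorbs the degenerate case $a=b$; the only quantitative fact needed is the very mild inequality $\sqrt{2}\,\pi n v\min(a,b)>1$, which is precisely what prevents $\Phi$ from having an interior stationary point and thus pins the optimum at $\tilde{y}=\rho$. The norm-invariance of $\rho$, the coordinate inversion, and the feasibility check are all routine.
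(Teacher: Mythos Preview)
Your proof is correct but follows a genuinely different route from the paper's. The paper works coordinate-by-coordinate: it computes $\partial \ln|H|^2/\partial \tilde{x}^{(\text{U})}$ and $\partial \ln|H|^2/\partial \tilde{z}^{(\text{U})}$, observes that each factors as (strictly negative coefficient)$\times\tilde{x}^{(\text{U})}$ (resp.\ $\tilde{z}^{(\text{U})}$), and concludes $\tilde{x}^{(\text{U})}=\tilde{z}^{(\text{U})}=0$ by one-variable monotonicity; the angles are then read off from the explicit row expressions in Lemma~\ref{lem:psiU_tilde}. You instead use rotational invariance of $\rho$ to strip away all orientation-independent terms, bound the remaining piece $T$ by a scalar envelope $\Phi(\tilde{y})$, and pin the global maximum at $\tilde{y}=\rho$ via the strict monotonicity of $\Phi$ under the mild hypothesis $\sqrt{2}\pi n v\min(a,b)>1$. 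Your approach is a clean upper-bound-plus-attainment argument that certifies a \emph{global} maximum directly and disposes of all spurious stationary points at once; it also makes explicit the (easily satisfied) cross-section condition that the paper's coordinate-wise argument leaves implicit. The paper's approach, on the other hand, is more mechanical and dovetails with its Jacobian framework for the subsequent $y^{(\text{P})}$ analysis---though note that your passing reference to ``the Jacobian-based separation'' is not actually used anywhere in your argument and could be dropped. The angle extraction via inverting \eqref{e:psiU} and the feasibility check $\theta^*\in(\pi/2,\pi]$ are handled the same way in both proofs.
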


\begin{proof}
Setting $\frac{\partial}{\partial {\tilde{x}^{(\text{U})}}} \ln |H|^2 = 0$ yields
\begin{eqnarray*}
\left[ \frac{ 2\ln(\alpha_{\text{L}})}{\sqrt{ \left( {\tilde{x}^{(\text{U})}} \right)^{2} + \left( {\tilde{y}^{(\text{U})}} \right)^{2} + \left( {\tilde{z}^{(\text{U})}} \right)^{2} }} 
 - \frac{ 4 (\pi n v)^2 }{ \left( {\tilde{y}^{(\text{U})}} \right)^2 } a^2 \right] {\tilde{x}^{(\text{U})}}=0.
\end{eqnarray*}
Since $\ln(\alpha_{\text{L}}) < 0$, the first multiplicative term in the above expression is always negative. Moreover, the left-hand side increases when ${\tilde{x}^{(\text{U})}} < 0$ and decreases when ${\tilde{x}^{(\text{U})}} > 0$. Therefore, the maximum is attained at $\left( {\tilde{x}^{(\text{U})}} \right)^* = 0$ when ${\tilde{y}^{(\text{U})}} \neq 0$.

Substituting the specific expression of $\left( {\tilde{x}^{(\text{U})}} \right)^*$ in Lemma \ref{lem:psiU_tilde} into $\left( {\tilde{x}^{(\text{U})}} \right)^* = 0$, we have 
\begin{equation*}
\left( {\tilde{x}^{(\text{U})}} \right)^* = \sin \varphi \left( x^{(\text{U})} - x^{(\text{P})} \right) - \cos \varphi \left( y^{(\text{U})} - y^{(\text{P})} \right) = 0.
\end{equation*}
Thus, we can solve the optimal $\varphi^*$.

Similarly, for the second equation $\frac{\partial}{\partial {\tilde{z}^{(\text{U})}}} \ln |H|^2 = 0$,   we get $\left( {\tilde{z}^{(\text{U})}} \right)^* = 0$, when ${\tilde{y}^{(\text{U})}} \neq 0$.

Combining the explicit expression of $\left( {\tilde{z}^{(\text{U})}} \right)^*$ in Lemma \ref{lem:psiU_tilde}, the condition $\left( {\tilde{z}^{(\text{U})}} \right)^* = 0$, and \eqref{e:phi}, the optimal value of $\theta$ can be derived as 
\begin{equation*}
    \theta^* = \arctan\left( \frac{ \sqrt{ \left( y^{(\text{U})} - y^{(\text{P})} \right)^2 + \left( x^{(\text{U})} - x^{(\text{P})} \right)^2 } }{ z^{(\text{U})} - z^{(\text{P})} } \right).
\end{equation*}


Given $\left( {\tilde{x}^{(\text{U})}} \right)^* = \left( {\tilde{z}^{(\text{U})}} \right)^* = 0$, the optimal $\left( {\tilde{y}^{(\text{U})}} \right)^*$ is given by
\begin{eqnarray}
\left( {\tilde{y}^{(\text{U})}} \right)^* =\left| \bm{\psi}^{(\text{U})} - \bm{\psi}^{(\text{P})} \right|,
\end{eqnarray}
thereby completing the proof.
\end{proof}

Lemma \ref{thm:single_angle} indicates an intuitive result: the optimal orientation simply directs the PA's beam perfectly towards the user to maximize the directional gain. With the optimal orientation established, we now determine the optimal PA position $\left( y^{(\text{P})} \right)^*$ along the waveguide.

We first analyze the derivative of the channel gain with respect to $y^{(\text{P})}$:
\begin{eqnarray}\label{e:partial_ln_H_2}
\hspace{-1cm}&&\frac{\partial \ln |H|^2}{\partial y^{(\text{P})}}
\overset{(a)}{=} -\alpha_{\text{W}} + \ln(\alpha_{\text{L}}) \cdot 2 \frac{\partial {\tilde{y}^{(\text{U})}}}{\partial y^{(\text{P})}} - 2 \frac{1}{ {\tilde{y}^{(\text{U})}} } \frac{\partial {\tilde{y}^{(\text{U})}}}{\partial y^{(\text{P})}} \nonumber \\
\hspace{-1cm}&&\overset{(b)}{=} -\alpha_{\text{W}} - 2 \ln(\alpha_{\text{L}})  \frac{ y^{(\text{U})} - y^{(\text{P})} }{ \left| \bm{\psi}^{(\text{U})} - \bm{\psi}^{(\text{P})} \right| } + 2   \frac{ y^{(\text{U})} - y^{(\text{P})} }{ \left| \bm{\psi}^{(\text{U})} - \bm{\psi}^{(\text{P})} \right|^2 },
\end{eqnarray}
where (a) follows by substituting $\left( {\tilde{x}^{(\text{U})}} \right)^* = \left( {\tilde{z}^{(\text{U})}} \right)^* = 0$,
(b) follows by substituting $ \tilde{y}^{(\text{U})} $ and $\frac{\partial {\tilde{y}^{(\text{U})}}}{\partial y^{(\text{P})}}$.

\begin{rem}\label{rem:yP_leq_yU}
We can reasonably assume that $\left( y^{(\text{P})} \right)^* \in (0,y^{(\text{U})})$, because if $\left( y^{(\text{P})} \right)^* > y^{(\text{U})} $, we can always find a symmetric position $\left( y^{(\text{P})} \right)^{\prime}$ for the PA with respect to $y^{(\text{U})}$, such that $\left( y^{(\text{P})} \right)^* - y^{(\text{U})} = y^{(\text{U})} - \left( y^{(\text{P})} \right)^{\prime}$. The required wireless space distance is the same, but since $\left( y^{(\text{P})} \right)^{\prime} < \left( y^{(\text{P})} \right)^*$, it only needs to travel a shorter distance within the waveguide.
\end{rem}

The three terms in \eqref{e:partial_ln_H_2} represent competing effects:
\begin{itemize}
    \item The impact of waveguide attenuation: as $y^{(\text{P})}$ increases, the waveguide attenuation steadily increases.
    \item The impact of LOS path existence probability: as $y^{(\text{P})}$ increases, the LOS existence probability increases, causing $|H|^2$ to increase at a decelerating rate.
    \item The impact of transmission in wireless space: as $y^{(\text{P})}$ increases, the distance required for transmission in free space becomes shorter, and the corresponding attenuation decreases.
\end{itemize}

Thus, there exists a clear trade-off between waveguide loss and wireless propagation reliability. This leads to a necessary condition for the optimal position to exists.


\begin{thm}[Existence of the Optimal PA Position]\label{thm:condition}
For a waveguide with attenuation coefficient $\alpha_{\text{W}}$ in an environment with LOS existence coefficient $\alpha_{\text{L}}$, the optimal PA position exists only when 
\begin{equation}\label{eq:THMcondition}
\alpha_{\text{L}} < \exp\left(-\frac{\alpha_{\text{W}}}{2}\right).
\end{equation}
\end{thm}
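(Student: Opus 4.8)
The plan is to reduce the statement to an elementary sign analysis of the single derivative in~\eqref{e:partial_ln_H_2}. First I would invoke Lemma~\ref{thm:single_angle} to fix the orientation at its optimum, so that $\tilde{x}^{(\text{U})}=\tilde{z}^{(\text{U})}=0$ and $\tilde{y}^{(\text{U})}=d:=|\bm{\psi}^{(\text{U})}-\bm{\psi}^{(\text{P})}|$; then, by the full-rank Jacobian argument already established, $\ln|H|^2$ may be treated as a function of the single remaining variable $y^{(\text{P})}$, whose derivative is exactly the right-hand side of~\eqref{e:partial_ln_H_2}. By Remark~\ref{rem:yP_leq_yU} it suffices to consider $y^{(\text{P})}\in(0,y^{(\text{U})})$, so $\Delta y:=y^{(\text{U})}-y^{(\text{P})}>0$. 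Writing $\cos\gamma:=\Delta y/d\in(0,1)$ for the cosine of the angle between the waveguide axis and the PA--user line, the derivative factors as
\begin{equation*}
\frac{\partial \ln|H|^2}{\partial y^{(\text{P})}} = -\alpha_{\text{W}} + \left(-2\ln\alpha_{\text{L}} + \frac{2}{d}\right)\cos\gamma .
\end{equation*}
Since $\alpha_{\text{L}}\in(0,1)$ we have $-2\ln\alpha_{\text{L}}>0$, and since this derivative tends to $-\alpha_{\text{W}}<0$ as $y^{(\text{P})}\to y^{(\text{U})}$ (and $y^{(\text{P})}=0$ is excluded by the constraint, with $y^{(\text{P})}>y^{(\text{U})}$ ruled out by Remark~\ref{rem:yP_leq_yU}), a legitimate maximizer in the feasible set can exist only if the derivative is strictly positive for some admissible PA position.

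Next I would extract the necessary condition from this positivity requirement. Momentarily neglecting the subleading free-space term $2/d$ (discussed below), positivity of the displayed derivative asks that $(-2\ln\alpha_{\text{L}})\cos\gamma>\alpha_{\text{W}}$ for some geometry; as $\cos\gamma<1$ always and $\cos\gamma\to 1$ only in the limit of a user far along the waveguide direction, such a geometry is realizable precisely when $-2\ln\alpha_{\text{L}}>\alpha_{\text{W}}$, i.e. $\alpha_{\text{L}}<\exp(-\alpha_{\text{W}}/2)$. Conversely, if $\alpha_{\text{L}}\ge\exp(-\alpha_{\text{W}}/2)$ then $-2\ln\alpha_{\text{L}}\le\alpha_{\text{W}}$, hence $(-2\ln\alpha_{\text{L}})\cos\gamma<\alpha_{\text{W}}$ for every $\gamma$, so the derivative stays negative on $(0,y^{(\text{U})})$; thus $|H|^2$ is strictly decreasing in $y^{(\text{P})}$ and its supremum is approached only at the forbidden endpoint $y^{(\text{P})}=0$, so no optimal PA position exists. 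This both proves~\eqref{eq:THMcondition} as a necessary condition and exhibits it as the exact break-even point of the trade-off the paper highlights, between in-waveguide attenuation $\alpha_{\text{W}}$ and LoS-blockage decay $-2\ln\alpha_{\text{L}}$. As a cheap second-order check, note that on the feasible interval $\cos\gamma=\Delta y/d$ is monotonically decreasing in $y^{(\text{P})}$ (larger $\Delta y$, fixed transverse offset), so the leading-order derivative $-\alpha_{\text{W}}+(-2\ln\alpha_{\text{L}})\cos\gamma$ changes sign at most once, from $+$ to $-$, confirming that an interior stationary point — when it exists — is a maximum rather than a minimum.

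The one nontrivial point, and the main obstacle, is the handling of the free-space pathloss slope $2\Delta y/d^2=(2/d)\cos\gamma$. It is strictly positive, so a priori it could manufacture a spurious interior stationary point even when $\alpha_{\text{L}}\ge\exp(-\alpha_{\text{W}}/2)$; in other words \eqref{eq:THMcondition} is really a leading-order statement, valid once this $O(1/d)$ correction is dominated by the $O(1)$ competition between $\alpha_{\text{W}}$ and $-2\ln\alpha_{\text{L}}$. I would make the omission defensible by an order-of-magnitude argument: at the PA--user separations relevant to mmWave operation $d$ is on the order of several metres, whereas $\alpha_{\text{W}}$ and $-\ln\alpha_{\text{L}}$ are $O(1)$~per metre, so $2/d$ is a small perturbation that does not move the threshold; substituting the adopted $\alpha_{\text{W}}=1.3$~dB/m makes this quantitative. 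The remainder of the argument is routine: differentiating the closed-form $\ln|H|^2$ of~\eqref{e:ln_H_2} at the optimal orientation, collecting the three competing terms, and reading off the sign.
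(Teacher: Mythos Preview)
Your proposal is correct and rests on the same physical setup as the paper---both fix the orientation via Lemma~\ref{thm:single_angle}, reduce to the one-variable derivative~\eqref{e:partial_ln_H_2}, and ultimately drop the free-space term $2/d$ as a subleading $O(1/d)$ correction to isolate the threshold $-2\ln\alpha_{\text{L}}=\alpha_{\text{W}}$. The executions differ substantially, however. The paper (Appendix~\ref{sec:AppA}) first locates the exact maximizer $\gamma_{\text{max}}$ of the derivative by solving $\partial f_{y^{(\text{P})}}/\partial\gamma=0$ (a quartic in $\gamma$), substitutes back, introduces an auxiliary quantity $T$, proves monotonicity of $T$ and of $f$ in $\alpha_{\text{L}}$, and only then argues $T\gg 2A$ for realistic parameters to reach $f(\gamma_{\text{max}})\approx-\alpha_{\text{W}}-2\ln\alpha_{\text{L}}$. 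Your route---factoring the derivative as $-\alpha_{\text{W}}+\bigl(-2\ln\alpha_{\text{L}}+2/d\bigr)\cos\gamma$ and reading off its supremum directly as $\cos\gamma\to1$---is considerably more elementary and makes the trade-off transparent in one line; it also gives, for free, the single-crossing property you note at the end. What the paper's longer computation buys is the explicit $\gamma_{\text{max}}$, which is reused immediately in the proof of Theorem~\ref{thm:optimal_y}; your argument trades that by-product for brevity. One minor notational point: you use $\gamma$ for the angle, while the paper reserves $\gamma$ for the displacement $y^{(\text{U})}-y^{(\text{P})}$, so rename your angle if the two arguments are to coexist.
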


\begin{proof}
See Appendix \ref{sec:AppA}.
\end{proof}

Theorem~\ref{thm:condition} establishes the quantitative trade-off between waveguide attenuation and wireless propagation reliability. If this condition is not met, any additional distance traveled within the waveguide results only in further energy loss; otherwise, extending the in-waveguide propagation distance can yield a performance gain. The amount of extension that maximizes this gain is characterized in Theorem~\ref{thm:optimal_y}.

\begin{thm}[Optimal PA Position] \label{thm:optimal_y}
If the condition in Theorem \eqref{thm:condition} is satisfied, the optimal PA position is given by
\begin{equation}\label{e:optimal_y}
y^* = \max {\left\{0, y^U - \gamma^* \right\}},
\end{equation}
where $$\gamma^* = \frac{\sqrt{A}}{\tan\left(\dfrac{-1 \! - \! \sqrt{A \! + \! \ln \alpha_{\text{L}} A^2 (\alpha_{\text{W}} \! +\!  2 \ln \alpha_{\text{L}})}}{\ln \alpha_{\text{L}} \cdot A}\right)}$$ is the optimal horizontal offset of the PA relative to the user, and $A = (x^{(\text{U})} - x^{(\text{P})})^2 + (z^{(\text{U})} - z^{(\text{P})})^2$. 
An approximation of $\gamma^*$ is
\begin{equation}\label{e:optimal_y_approx}
\gamma^* = \sqrt{\frac{A\alpha_{\text{W}}^2}{(2 \ln \alpha_{\text{L}})^2 - \alpha_{\text{W}}^2}}.
\end{equation}

\end{thm}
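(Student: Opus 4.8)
The plan is to leverage the variable decoupling already in hand. By Lemma~\ref{thm:single_angle}, at the optimal orientation the user lies on the PA boresight, so $\tilde{x}^{(\text{U})}=\tilde{z}^{(\text{U})}=0$ and $\tilde{y}^{(\text{U})}=|\bm{\psi}^{(\text{U})}-\bm{\psi}^{(\text{P})}|$; substituting this into \eqref{e:ln_H_2} leaves $\ln|H|^2$ a function of $y^{(\text{P})}$ only, so (P2) collapses to $\max_{y^{(\text{P})}\in(0,V_y]}\ln|H|^2$. By Remark~\ref{rem:yP_leq_yU} it suffices to restrict to $y^{(\text{P})}\in(0,y^{(\text{U})}]$, and I reparametrize by the horizontal offset $\gamma\triangleq y^{(\text{U})}-y^{(\text{P})}\ge 0$, for which $|\bm{\psi}^{(\text{U})}-\bm{\psi}^{(\text{P})}|=\sqrt{\gamma^2+A}$ with $A=(x^{(\text{U})}-x^{(\text{P})})^2+(z^{(\text{U})}-z^{(\text{P})})^2$. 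By Theorem~\ref{thm:condition}, under \eqref{eq:THMcondition} the objective has an interior stationary point — a zero of the derivative in \eqref{e:partial_ln_H_2} — which the proof of that theorem identifies as the global maximizer along the waveguide; when this point falls outside the feasible range (i.e.\ would require $y^{(\text{P})}<0$), the sign of \eqref{e:partial_ln_H_2} shows $\ln|H|^2$ is decreasing on all of $[0,y^{(\text{U})}]$, so the maximizer slides to the endpoint $y^{(\text{P})}=0$. This yields the $\max\{0,\,y^{(\text{U})}-\gamma^*\}$ structure of \eqref{e:optimal_y}, and reduces everything to solving for the stationary offset $\gamma^*$.

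Setting the right-hand side of \eqref{e:partial_ln_H_2} to zero and using $\tilde{y}^{(\text{U})}=\sqrt{\gamma^2+A}$ gives the scalar equation
\begin{equation*}
\alpha_{\text{W}}+\frac{2\ln\alpha_{\text{L}}\,\gamma}{\sqrt{\gamma^2+A}}-\frac{2\gamma}{\gamma^2+A}=0 .
\end{equation*}
I then introduce the angular variable $\phi=\arctan(\sqrt A/\gamma)\in(0,\tfrac{\pi}{2})$ — geometrically, the angle between the PA--user line and the waveguide axis — under which $\gamma=\sqrt A/\tan\phi$, $\gamma/\sqrt{\gamma^2+A}=\cos\phi$ and $\gamma/(\gamma^2+A)=\cos\phi\sin\phi/\sqrt A$, so the equation becomes $\alpha_{\text{W}}+2\ln\alpha_{\text{L}}\cos\phi-\sin(2\phi)/\sqrt A=0$. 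This transcendental relation has no elementary solution, so I replace the trigonometric terms by a low-order expansion about $\phi=0$ ($\cos\phi\approx 1-\phi^2/2$, $\sin 2\phi\approx 2\phi$), which turns it into a quadratic in $\phi$. The hypothesis \eqref{eq:THMcondition} gives $\ln\alpha_{\text{L}}<0$ and $\alpha_{\text{W}}+2\ln\alpha_{\text{L}}<0$, so exactly one root of the quadratic is positive; this root is the $\phi^*$ appearing in \eqref{e:optimal_y}, and hence $\gamma^*=\sqrt A/\tan\phi^*$ as asserted.

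For the compact approximation \eqref{e:optimal_y_approx}, I note that the last term of \eqref{e:partial_ln_H_2} is the $1/d^2$ geometric-spreading contribution, which at any realistic link distance $d=\sqrt{\gamma^2+A}$ of several metres is dominated by the LoS-probability term $2\ln\alpha_{\text{L}}\,\gamma/\sqrt{\gamma^2+A}$; dropping it leaves the algebraic equation $\gamma/\sqrt{\gamma^2+A}=\alpha_{\text{W}}/(-2\ln\alpha_{\text{L}})$. Squaring both sides and solving the resulting linear equation in $\gamma^2$ gives $\gamma^*=\sqrt{A\alpha_{\text{W}}^2/\big((2\ln\alpha_{\text{L}})^2-\alpha_{\text{W}}^2\big)}$, which is real precisely when $(2\ln\alpha_{\text{L}})^2>\alpha_{\text{W}}^2$, i.e.\ $\alpha_{\text{L}}<e^{-\alpha_{\text{W}}/2}$ — exactly the admissibility condition of Theorem~\ref{thm:condition}.

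The hard part is extracting a closed form from the transcendental stationarity equation: one must commit to a specific series truncation, pick out the physically admissible (positive) root, and be confident that the truncation is accurate in the operating regime — equivalently, that the optimal angle $\phi^*$ is not too large, which holds near the boundary of \eqref{eq:THMcondition} where $\gamma^*$ is large. A secondary subtlety is the global-optimality statement hidden in the $\max\{0,\cdot\}$ formula: it must follow from the sign structure of \eqref{e:partial_ln_H_2} (the unimodality argument underpinning Theorem~\ref{thm:condition}), not merely from the numerical illustration in Fig.~\ref{fig:Hp_y}.
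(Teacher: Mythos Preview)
Your proposal is correct and follows essentially the same route as the paper: the same angular substitution $\phi=\arctan(\sqrt{A}/\gamma)$ (the paper's $\vartheta$), the same second-order Taylor expansion yielding a quadratic whose positive root is $\phi^*$, and the same dropping of the $1/(\gamma^2+A)$ term to obtain the compact approximation. Your justification of the $\max\{0,\cdot\}$ clipping and of selecting the positive quadratic root is, if anything, more explicit than the paper's own appendix.
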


\begin{proof}
See Appendix \ref{sec:AppB}.
\end{proof}

\begin{rem}[Translation-Invariant Optimal Offset]
    The optimal horizontal offset of the PA relative to the user, $\gamma^*$, is a constant determined solely by the system parameters $\alpha_{\text{W}}$, $\alpha_{\text{L}}$, and the squared lateral distance $A$ between user and waveguide. This reveals a translation-invariant structure: the optimal PA position $y^*$ always maintains a fixed offset from the user location $y^{(\text{U})}$, implying that the optimal offset does not incur accumulated attenuation as the user moves along the waveguide direction.
\end{rem}

\begin{rem}[Fundamental Trade-off Governing PA Placement]
    Theorem \ref{thm:optimal_y} quantifies the core trade-off: the optimal PA position shifts away from the user (i.e., $y^*$ decreases) as waveguide attenuation $\alpha_{\text{W}}$ decreases or as the LoS environment improves (larger $\alpha_{\text{L}}$). This is because lower waveguide loss or a less reliable LoS link makes it beneficial to ``spend'' more distance in the waveguide to get closer to the user in free space.
\end{rem}

\begin{rem}[Practical Implications for Waveguide Deployment]
    For a square coverage region of length $D_y$, even if full-area coverage is required, only a waveguide of length $D_y - \gamma^*$ is needed. Alternatively, if the feeding point of the waveguide is placed at its midpoint, a length of $D_y - 2\gamma^*$ would suffice. Moreover, by employing segmented waveguides\cite{ouyang2025uplink}, the deployment efficiency can be further improved.
\end{rem}

\begin{rem}[Reconciliation with Prior Idealized Models]
    According to \eqref{e:optimal_y_approx}, when the user is collinear with the waveguide or when the waveguide attenuation coefficient is zero, the optimal position coincides with $y^{(\text{U})}$, which is consistent with the results reported in prior works \cite{liu2025pinching} that assume no attenuation within the waveguide.
\end{rem}

\subsection{Multi-PA Optimization Framework}\label{sec:IVB}

\begin{algorithm}[!tb]
\caption{Joint Optimization for DiPASS.}\label{alg:joint_optimization}
\begin{algorithmic}[1]
\vspace{0.1cm}
\REQUIRE $\{\bm{\psi}_m^{(\text{U})}\}$, $\{\bm{\psi}_n^{(\text{W})}\}$, $D_y$, $\alpha_{\text{W}}$, $\alpha_{\text{L}}$
\ENSURE $\bm{\mathcal{M}}^*$, $\bm{W}^*$, $\{y_{\ell_n}^{(\text{P})*}\}$

\STATE Initialize $\bm{H}^{(\text{O})}$ by \eqref{e:HP}
\STATE \textbf{if} $M \geq NL$ 
\STATE \hspace{0.3cm} Solve $\bm{\mathcal{M}}^*$ by Hungarian algorithm with virtual PAs
\STATE \textbf{else}
\STATE \hspace{0.3cm} Initialize $\bm{\mathcal{M}}^{(M)}$ with virtual users
\STATE \hspace{0.3cm} \textbf{for} $j = 1, \dots, NL-M $ 
\STATE \hspace{0.6cm} Greedy allocation for remaining PAs by \eqref{e:marginal_gain}
\STATE \hspace{0.6cm} Update $\bm{\mathcal{M}}^{M+j}$
\STATE Verify constraints \eqref{e:P1d}, \eqref{e:P1e}, \eqref{e:P1f} 
\STATE Phase Fine-Tuning
\STATE Solve $\bm{W}^*$ by beamforming algorithm 
\STATE \textbf{return} $\bm{\mathcal{M}}^*, \bm{W}^*, \{y_{\ell_n}^{(\text{P})*}\}$
\end{algorithmic}
\end{algorithm}

Building upon the fundamental insights from the single-PA analysis, we now address the general sum-rate maximization problem (P1) for systems with multiple waveguides and PAs. Our prior analysis established two key insights:
\begin{itemize}[leftmargin=0.5cm]
    \item The highly directional nature of PA radiation confines its effective coverage, making it practical to assume that each PA primarily serves a single user to minimize inter-user interference. 
    \item For any given PA-user pair, the optimal orientation and position that maximize the channel gain have been derived in closed-form in Lemma~\ref{thm:single_angle} and Theorem~\ref{thm:optimal_y}, respectively.
\end{itemize}

These premises allow us to recast the complex joint optimization in (P1) into a more tractable PA-user assignment problem. The core idea is to pre-compute the maximum possible channel gain each PA can offer to every user when configured at its optimal position and orientation. We encapsulate these gains into an optimal channel gain matrix $\bm{H}^{(\text{O})} \in \mathbb{R}^{M \times NL}$, where each element is given by
\begin{eqnarray}\label{e:HP}
\hspace{-1cm}&& h^{(\text{O})}_{m,\ell_n} 
= \sqrt{\frac{1}{L}} e^{ -\frac{\alpha_{\text{W}}}{2} \left( y_{\ell_n}^{(\text{P})} \right)^* } \alpha_{\text{L}}^{ \left( {\tilde{y}_{m,\ell_n}^{(\text{U})}} \right)^* } \frac{\lambda}{2} \frac{ n v \sqrt{2 a b} }{ {\tilde{y}_{m,\ell_n}^{(\text{U})}} } \\
\hspace{-1cm}&& \cdot \exp\left\{ -j \frac{2\pi}{\lambda_g}  \left( y_{\ell_n}^{(\text{P})} \right)^* - j k n \left( {\tilde{y}_{m,\ell_n}^{(\text{U})}} \right)^* + j \frac{\Theta_1 + \Theta_2}{2} \right\}. \nonumber
\end{eqnarray}
This composite channel gain is synthesized from the two channel gain given in Propositions \ref{lem:H_WP} and \ref{prop:H_PU}, and the optimal distance $\left( {\tilde{y}_{m,\ell_n}^{(\text{U})}} \right)^* $ given in Theorem \ref{thm:optimal_y}.

With $\bm{H}^{(\text{O})}$ defined, we formalize the assignment problem by introducing a binary mask matrix $\bm{\mathcal{M}}$ of the same dimensions. An entry $\mathcal{M}_{m,\ell_n}=1$ signifies the assignment of PA $\ell_n$ to user $m$. The constraints ensure a valid matching: each PA is assigned to at most one user (Eq. \eqref{e:P3_assign1}), and when resources permit ($M \leq N L$), each user is guaranteed service (Eq. \eqref{e:P3_assign2}). This leads to the reformulated problem as follows.
\begin{subequations}
\begin{align}
\hspace{-1cm}\text{(P3):} \quad  \max_{\bm{\mathcal{M}},~\bm{W}} F(\bm{\mathcal{M}}) &= \sum_{m=1}^{M} \log_2 \!\left(\! 1 \!+\!  \frac{ |q_{m,1}|^2 }{ |q_{m,2}|^2 \!+\! \sigma_m^2 } \!\right)\!\!, \label{e:P3obj}\\
\text{s.t.} \hspace{2.7cm} & \sum_{m=1}^{M} \mathcal{M}_{m,\ell_n} = 1, \label{e:P3_assign1}\\
& \sum_{\ell_n=1}^{NL} \mathcal{M}_{m,\ell_n} \geq 1, \label{e:P3_assign2}\\
& \bm{\mathcal{M}} \in \{0,1\}^{M \times NL}, \label{e:P3_binary}\\
& \eqref{e:P1w}.
\end{align}
\end{subequations}

Here, $|q_{m,1}|^2$ and $|q_{m,2}|^2$ represent the signal and interference power for user $m$, respectively:
\begin{align}
|q_{m,1}|^2 &= P \left| \sum_{\ell_n=1}^{NL} \mathcal{M}_{m,\ell_n} \left| [\bm{H}^{(\text{O})}]_{m,\ell_n} \right| [\bm{W}]_{n,m} \right|^2, \\
|q_{m,2}|^2 &=\! P \!\!\!\! \!\sum_{i=1, i \neq m}^{M} \left| \sum_{\ell_n=1}^{NL} \mathcal{M}_{m,\ell_n} \left| [\bm{H}^{(\text{O})}]_{m,\ell_n} \right| [\bm{W}]_{n,i} \right|^2.
\end{align}

This formulation leverages the fact that phase can be adjusted over a $2\pi$ range at the wavelength scale with negligible impact on signal intensity. Furthermore, the directional radiation of PAs naturally mitigates interference, allowing individual PA position adjustments for phase alignment. This will be addressed in the final step of our optimization.

\subsubsection{PA Assignment}
The task of solving (P3) is to find the optimal assignment matrix $\bm{\mathcal{M}}$. 
As summarized in Algorithm \ref{alg:joint_optimization}, we initialize $\bm{\mathcal{M}}$ as an $M \times NL$ all-zero matrix and distinguish two cases: $M \geq NL$ and $M < NL$.

In the first case, with users outnumbering PAs, the goal is to select the best $NL$ user-PA pairings. We introduce $M-NL$ virtual PAs with zero channel gain and apply the Hungarian algorithm \cite{papadimitriou1998combinatorial} to the augmented problem, maximizing the sum rate while ensuring each real PA serves one user.

In the first case, users outnumbering PAs. Since each antenna can effectively cover only one user in space, at most $NL$ users can be served simultaneously. To enforce this constraint, we add $M - NL$ virtual PAs with zero channel gain and apply the Hungarian algorithm \cite{papadimitriou1998combinatorial} to the augmented assignment problem for sum rate maximization. The resulting assignment matrix $\bm{\mathcal{M}}$ contains exactly $NL$ columns with a single non-zero entry, thereby achieving the highest possible sum rate under the given constraints.

In the second case, where $M < NL$, we ensure that each user is assigned at least one antenna by first adding $(NL - M)$ virtual users with zero channel gain, so that the total number of users matches the number of PAs. This allows us to apply the Hungarian algorithm again to obtain the optimal matching. The resulting assignment matrix, denoted by $\bm{\mathcal{M}}^{(M)}$, contains exactly $M$ columns with a single $1$, indicating that each real user is assigned one PA in this initial stage.

The remaining antennas can then be allocated iteratively using a greedy algorithm. For each iteration $j = 1 \text{ to } (NL - M)$, we start with the current assignment matrix $\bm{\mathcal{M}}^{(M+j-1)}$, which already contains $(M + j - 1)$ assigned PAs. For every unassigned PA $\ell_n$ and each user $m$, we compute the marginal rate gain obtained by assigning the $(M+j)$-th PA to that user, which is expressed as follows:
\begin{equation}\label{e:marginal_gain}
    \Delta r_m^{(j)} = \log_2\!\left( \frac{1 + \xi_m^{(j)}}{1 + \xi_m^{(j-1)}} \right),
\end{equation}
where $\xi_m^{(j)}$ denotes the SINR after incorporating the additional PA. The PA-user pair that yields the maximum $\Delta r_m^{(j)}$ is selected, and the assignment matrix is updated as $\bm{\mathcal{M}}^{(M+j)}$ accordingly. After completing all $(NL - M)$ iterations, the final assignment matrix is obtained, in which all $NL$ PAs have been assigned.

The SINR after the $j$-th assignment is computed as
\begin{equation*}
\xi_m^{(j)} = \frac{ P \left| \sum_{\ell_n=1}^{NL} \mathcal{M}_{m,\ell_n}^{(j)} \left| h^{(\text{O})}_{m,\ell_n} \right| [\bm{W}]_{n,m} \right|^2 }{ P \sum_{i=1, i \neq m}^{M} \left| \sum_{\ell_n=1}^{NL} \mathcal{M}_{m,\ell_n}^{(j)} \left| h^{(\text{O})}_{m,\ell_n} \right| [\bm{W}]_{n,i} \right|^2 + \sigma_m^2 },
\end{equation*}
where the summation is taken over all PAs $\ell_n = 1$ to $NL$, but only those with $\mathcal{M}_{m,\ell_n}^{(j)} = 1$ contribute to the signal and interference terms.

\subsubsection{Post-Assignment Fine-tuning and Beamforming} 
The assignment algorithm determines the optimal serving relationship between PAs and users. Based on this assignment, the nominal optimal position for each PA, as derived from Theorem \ref{thm:optimal_y}, is obtained. However, these nominal positions may not form a physically realizable system configuration. Therefore, we introduce two fine-tuning stages to ensure practicality and maximize performance.

First, the independently computed optimal PA positions on the same waveguide may violate the minimum spacing constraint \eqref{e:P1e} or exceed the waveguide length limit \eqref{e:P1d}. To address this, we process each waveguide sequentially: for any PA whose nominal position violates these physical constraints, its position is projected onto the boundary of the feasible region. This ensures that the final configuration is deployable without compromising the assignment solution.

Second, when multiple PAs from different waveguides are assigned to the same user, their signals can be coherently combined at the receiver. We exploit this property by fine-tuning the PA positions at the wavelength scale to further enhance coherent combining gain.

According to \eqref{e:HP}, the phase of the channel is
\begin{equation}
    \phi_{m,\ell_n}^{(\text{O})} = -j \frac{2\pi n}{\lambda} \left( y_{\ell_n}^{(\text{P})} \right)^* - j \frac{2\pi}{\lambda} \left( {\tilde{y}_{m,\ell_n}^{(\text{U})}} \right)^* + j \frac{\Theta_1 + \Theta_2}{2}.
\end{equation}
Therefore, A small displacement $\Delta y_{\ell_n}$ of the PA induces a phase change of
\begin{equation}
\Delta \phi_{m,\ell_n}^{(\text{O})}  =  -j \left( \frac{2\pi n}{\lambda} - \frac{2\pi}{\lambda}  \sin \theta_{\ell_n} \sin \varphi_{\ell_n} \right) \Delta y_{\ell_n} .
\end{equation}

Select the path with the strongest signal to user $m$ as the reference, and align the phases of the remaining paths with it by adjusting each PA within a displacement of one wavelength, $\Delta y_{\ell_n}$. If the adjusted PA position violates the distance constraint, it is projected back into the feasible region.

Finally, with the physical configuration (PA assignment, positions, and orientations) optimized, the digital precoding matrix $\bm{W}$ can be computed using established algorithms like WMMSE \cite{shi2011iteratively} or ZF \cite{wiesel2008zero} to mitigate any residual multi-user interference at the base station. The complete optimization procedure is outlined in Algorithm \ref{alg:joint_optimization}.

\section{Numerical Results}\label{sec:V}
This section provides a comprehensive performance evaluation of the proposed directional-PASS framework through extensive numerical simulations. We validate the analytical derivations and demonstrate the system's capabilities in terms of coverage characteristics and sum-rate optimization. Unless otherwise specified, all simulations are conducted at a carrier frequency of $100$~GHz, with a waveguide attenuation coefficient of $\alpha_{\text{W}} = 1.3$ dB/m, a LOS existence coefficient of $\alpha_{\text{L}} = 0.5$, and users uniformly distributed on the floor of a $10\,\text{m} \times 10\,\text{m} \times 3\,\text{m}$ service volume. The total transmit power is set to $40$~W.



\begin{figure}[!tb]
    \centering
    \includegraphics[width=0.8\linewidth]{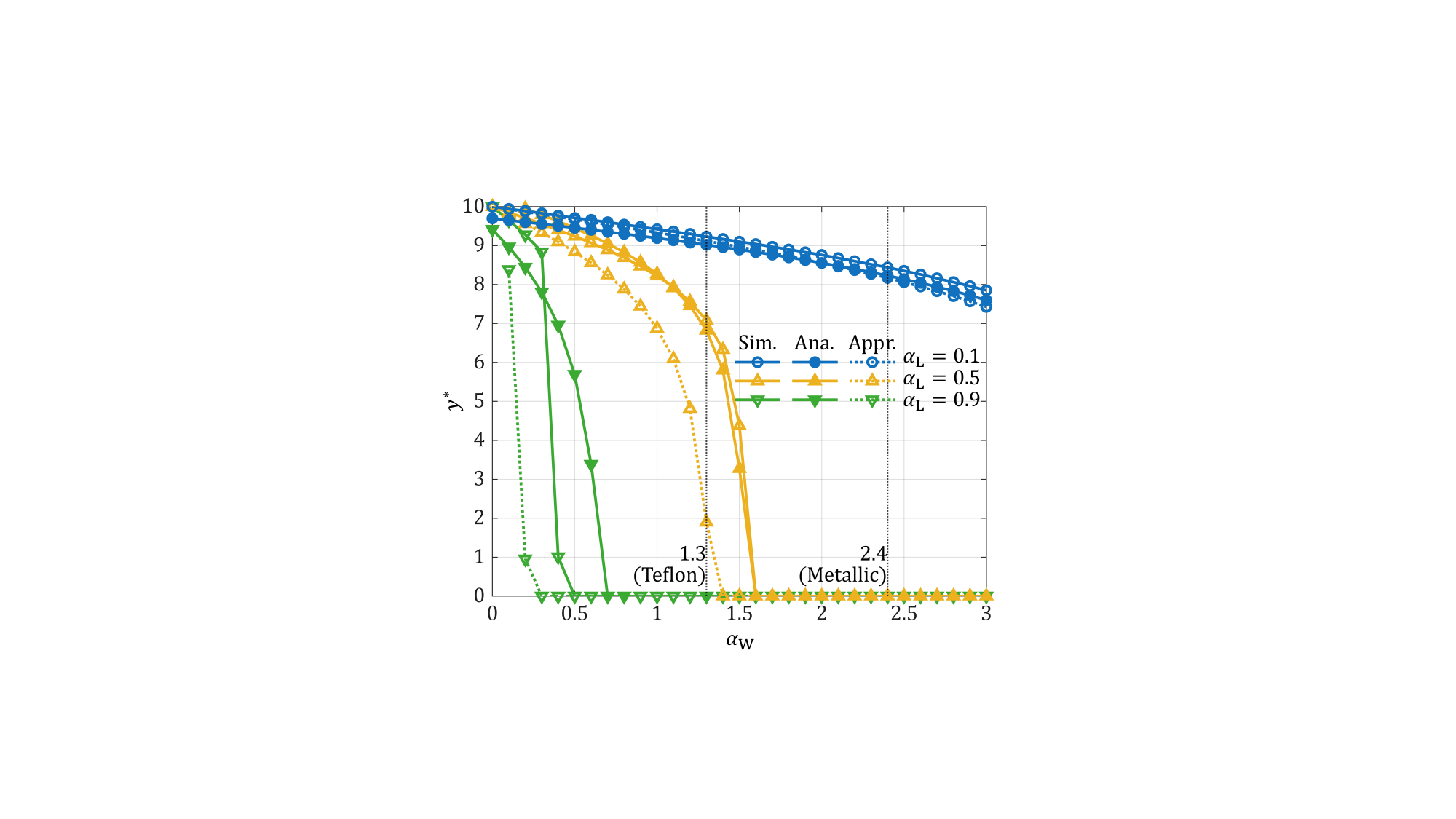}
    \caption{ Optimal $y^*$ under different combinations of $\alpha_{\text{W}}$ and $\alpha_{\text{L}}$, with simulation results, analytical solutions in \eqref{e:optimal_y}, and simplified approximate expressions in \eqref{e:optimal_y_approx}.}
    \label{fig:optimal_y}
\end{figure}

\subsection{Single-Antenna Performance Analysis}
We begin by examining the fundamental trade-offs in a single-antenna, single-user scenario to validate our theoretical analysis and reveal key physical insights.

Fig.~\ref{fig:optimal_y} compares the optimal PA position $y^*$ obtained through numerical optimization with our analytical solution in \eqref{e:optimal_y} and its approximation in \eqref{e:optimal_y_approx}. In the comparison, the single user is located at $y = 10$. 
It can be observed that \eqref{e:optimal_y} accurately captures the variation of the optimal position across different parameter settings. 
While the approximate solution \eqref{e:optimal_y_approx} exhibits some expected deviation, it correctly captures the essential trends of how $y^*$ varies with system parameters.

The results provides an intuitive illustration of how $y^*$ varies with the three key parameters.
The material employed in Docomo's demonstration is Polytetrafluoroethylene (PTFE) \cite{suzuki2022pinching}, commonly known as Teflon, whose typical attenuation is approximately 1.3~dB/m \cite{yeh2008essence}. For reference, a conventional metallic waveguide exhibits a higher attenuation of about 2.4~dB/m. Both attenuation values are indicated in the figure for comparative clarity.
When the waveguide attenuation is small, the additional propagation distance inside the waveguide does not incur a significant cost. Similarly, when the LOS existence probability is low, it is preferable to minimize the propagation distance in free space. In both cases, the optimal antenna position tends to be closer to the user. In the extreme case where the waveguide attenuation is zero, the optimal position simplifies to the location closest to the user.

The variation of $y^*$ with different deployment heights is further illustrated in Fig.~\ref{fig:y_aWG}. As the height difference increases, the benefit in wireless transmission distance from the same antenna displacement decreases, reflecting a weaker near-field effect. Consequently, incurring additional propagation loss inside the waveguide becomes less cost-effective, leading to a more conservative choice of antenna position. Moreover, $y^*$ becomes increasingly sensitive to variations in the waveguide attenuation.
As shown, when the waveguide attenuation is $\alpha_{\text{W}} = 1.3$~dB/m and the deployment height is $z = 3$, the optimal antenna position $y^*$ remains within $7$m when the LOS visibility is not less than $0.5$. This implies that a $7$m-long waveguide is sufficient to provide full coverage for a $10$m region. Furthermore, when the deployment height increases to $z = 10$ and the LOS existence probability exceeds $0.3$, extending the waveguide beyond $3$m does not yield any additional performance gain.

\begin{figure}[!tb]
    \centering
    \includegraphics[width=0.8\linewidth]{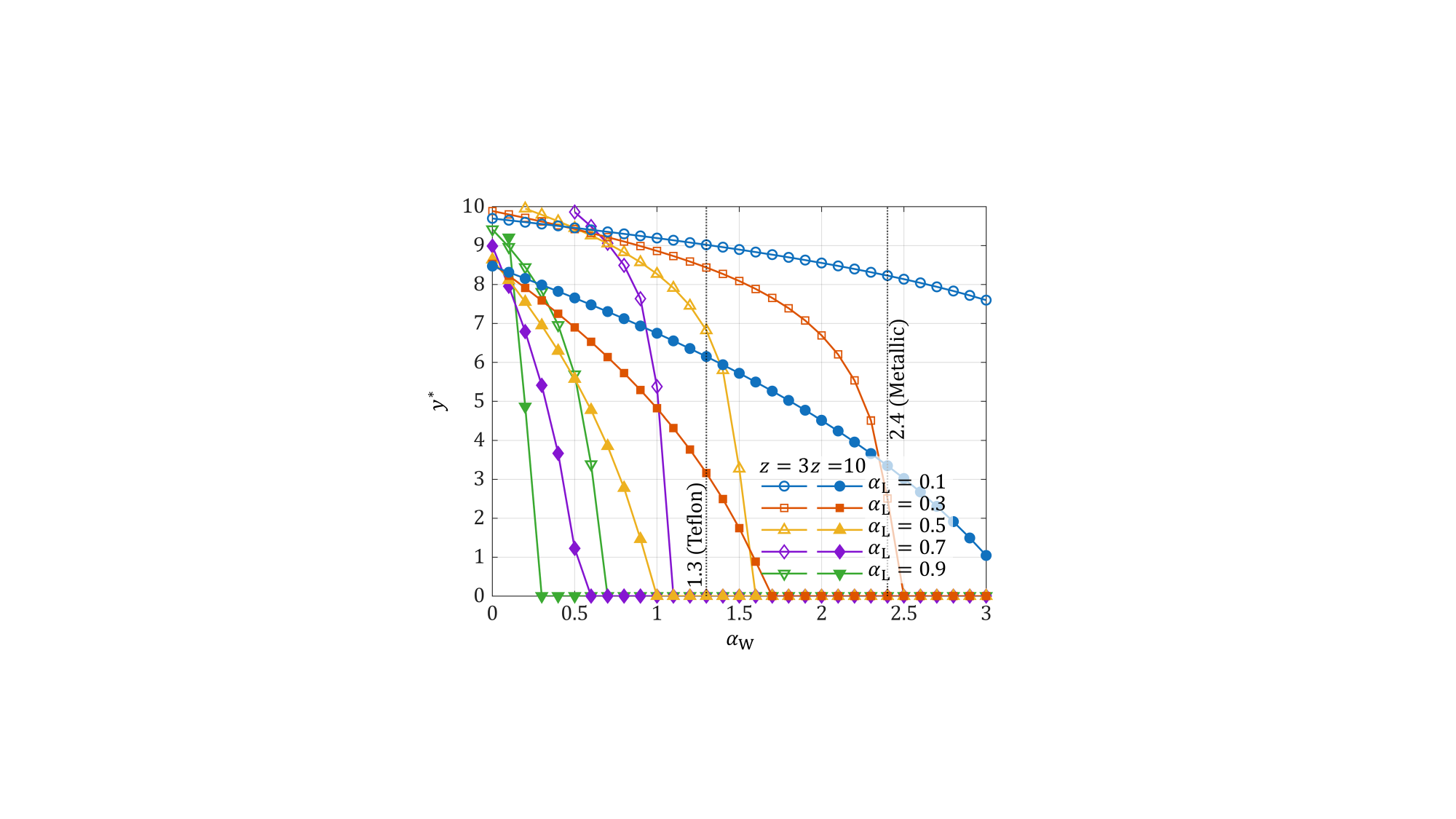}
    \caption{ Variation of $y^*$ under different deployment heights.}
    \label{fig:y_aWG}
\end{figure}

Fig.~\ref{f:map_y_star} illustrates the spatial distributions of $y^*$ and channel gain $|H|^2$ under the single-waveguide and single-antenna configuration. It can be observed that the directional radiation model enables significantly wide coverage within the waveguide length. As shown in Theorem~\ref{thm:optimal_y}, for a given system setup, the optimal relative position remains constant. Along the direction parallel to the waveguide, $y^*$ varies linearly with the user's coordinate, allowing approximately $30\%$ of the waveguide length to cover most of the service area. Along the parallel line, the channel power gain $|H|^2$ decreases approximately exponentially, mainly due to the intrinsic attenuation inside the waveguide. The maximum attenuation is around 50~dB, which is comparable to the typical loss of indoor 2.4~GHz WiFi. In contrast, the attenuation of the 100~GHz PASS system proposed in this work remains lower than that of conventional WiFi in most regions. Moreover, by guiding the signal closer to the user through the waveguide, the proposed system enhances the likelihood of maintaining a line-of-sight (LOS) connection, while the use of high-frequency signals inherently mitigates multipath interference.

\begin{figure}[!tb]
\centering
    \begin{subfigure}[!tb]{1\linewidth}
        \centering
        \includegraphics[width=.8\linewidth]{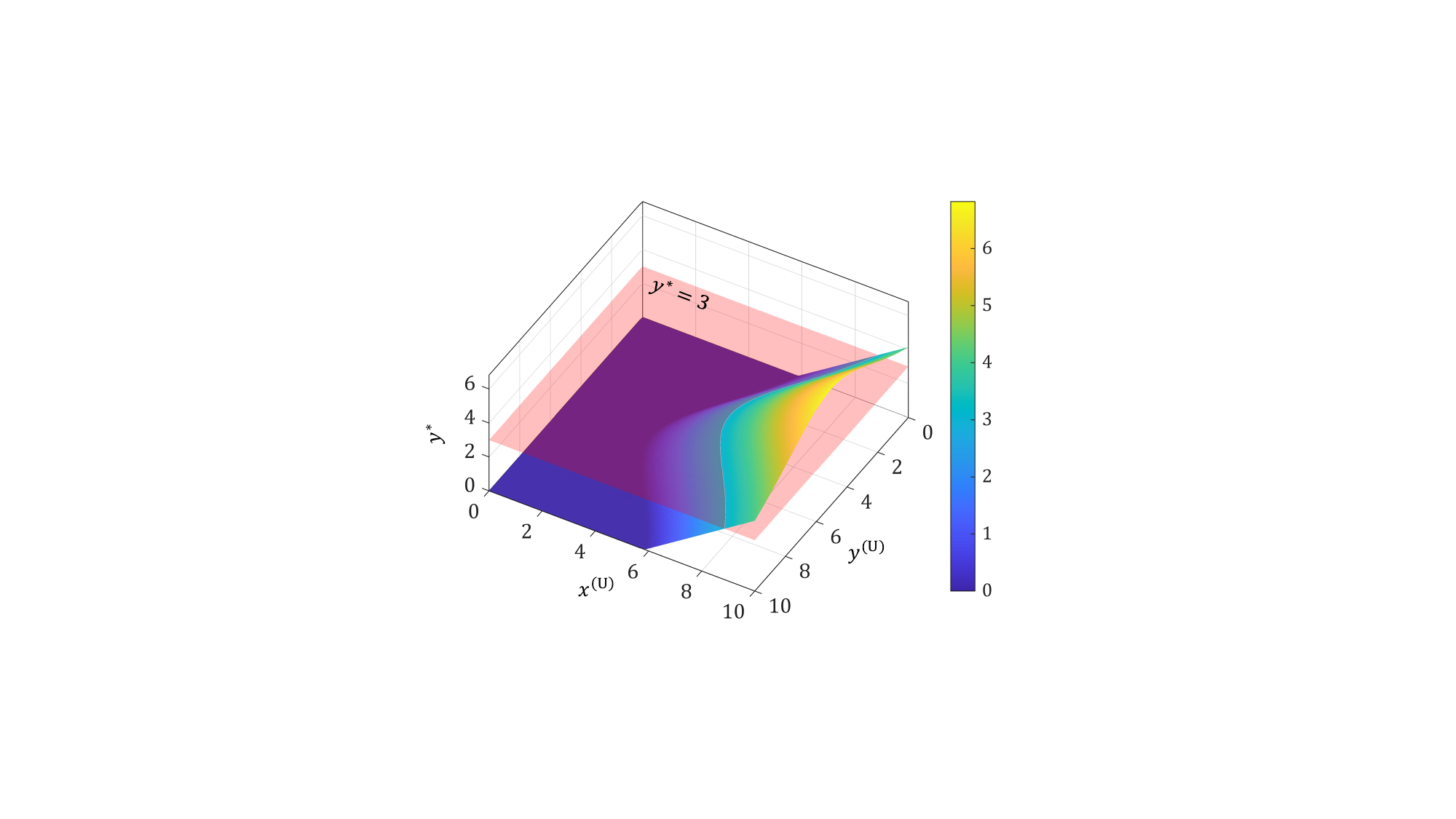}
        \caption{}
    \end{subfigure}
    \hfill
    \begin{subfigure}[!tb]{1\linewidth}
        \centering
        \includegraphics[width=.8\linewidth]{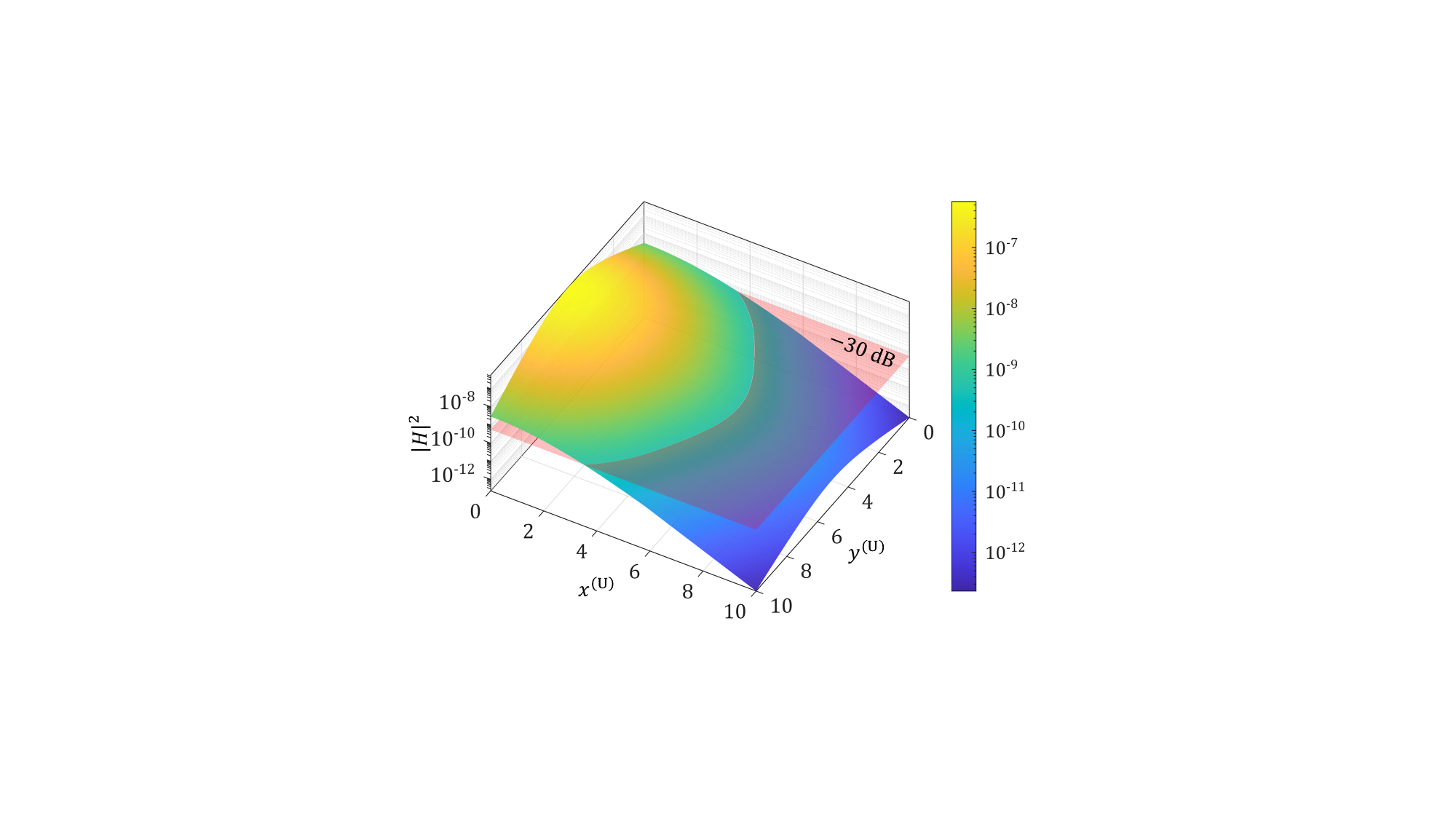}
        \caption{}
    \end{subfigure}
\caption{The spatial distributions of $y^*$ and $|H|^2$ under the single-waveguide and single-antenna configuration.}
\label{f:map_y_star}
\end{figure}

\begin{figure*}[!tb]
\centering
    \begin{subfigure}[!tb]{0.245\linewidth}
        \centering
        \includegraphics[width=1\linewidth]{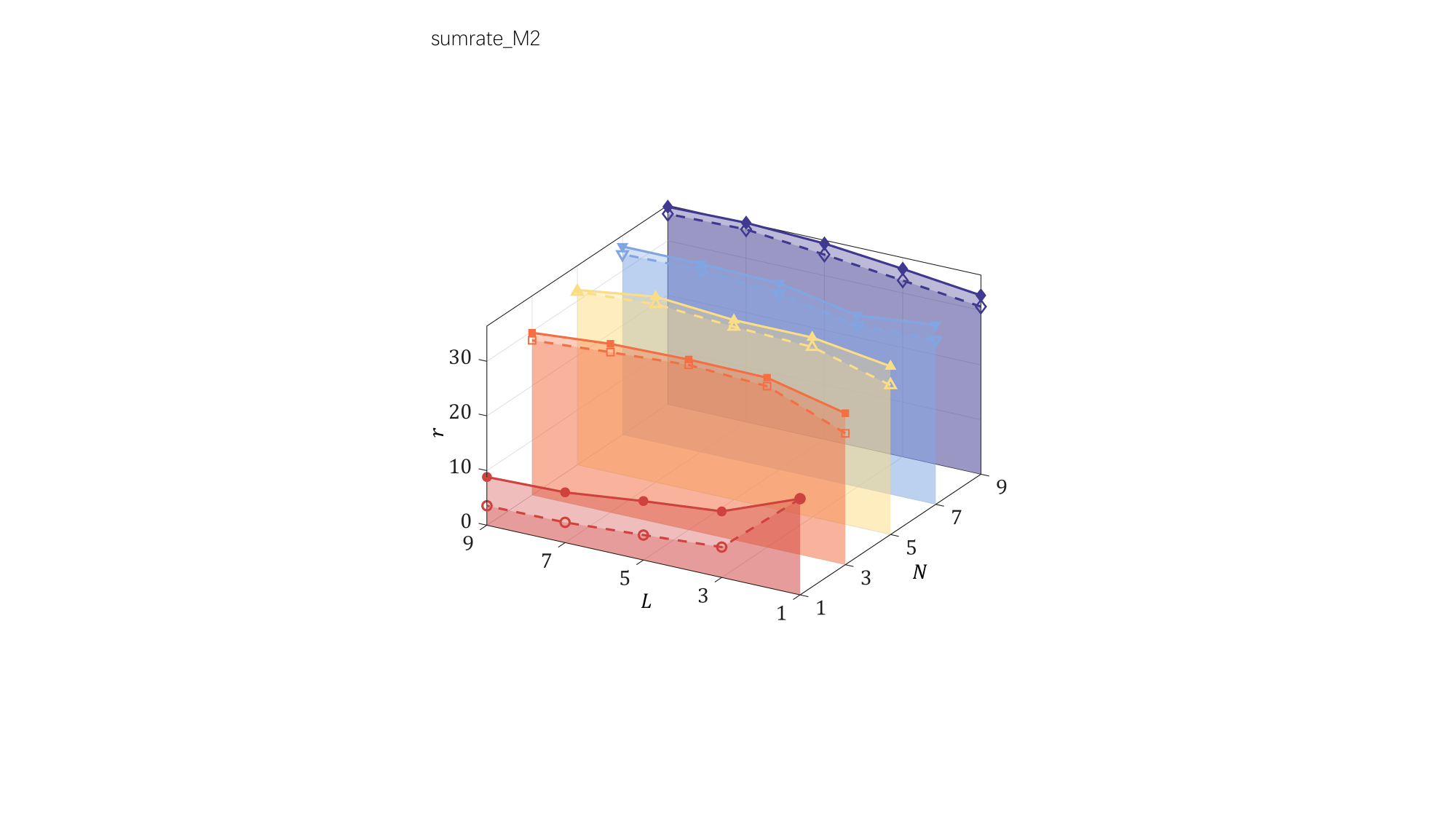}
        \caption{$M = 2$}
    \end{subfigure}
    \begin{subfigure}[!tb]{0.245\linewidth}
        \includegraphics[width=1\linewidth]{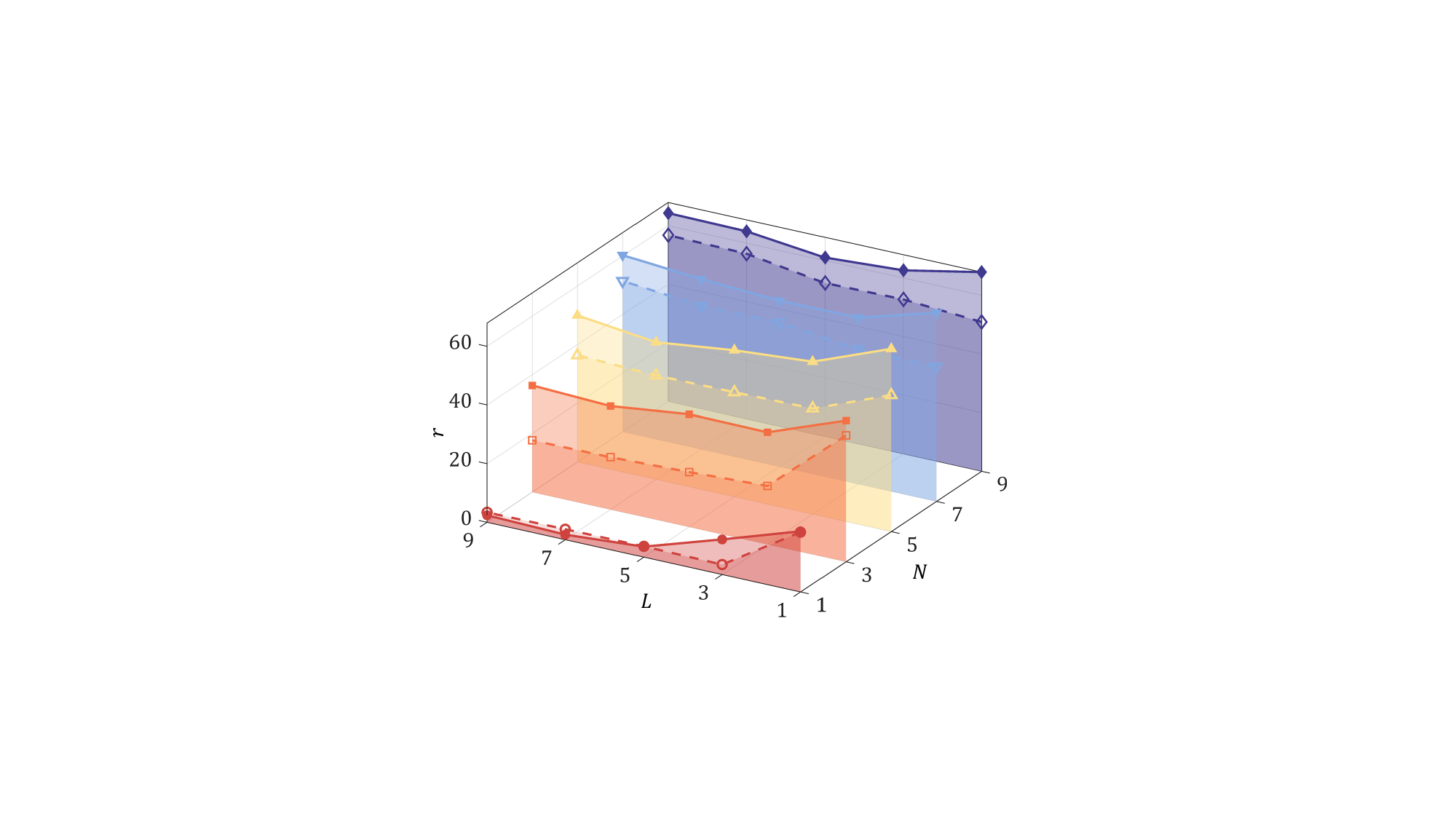}
        \caption{$M = 5$}
    \end{subfigure}
    \begin{subfigure}[!tb]{0.245\linewidth}
        \includegraphics[width=1\linewidth]{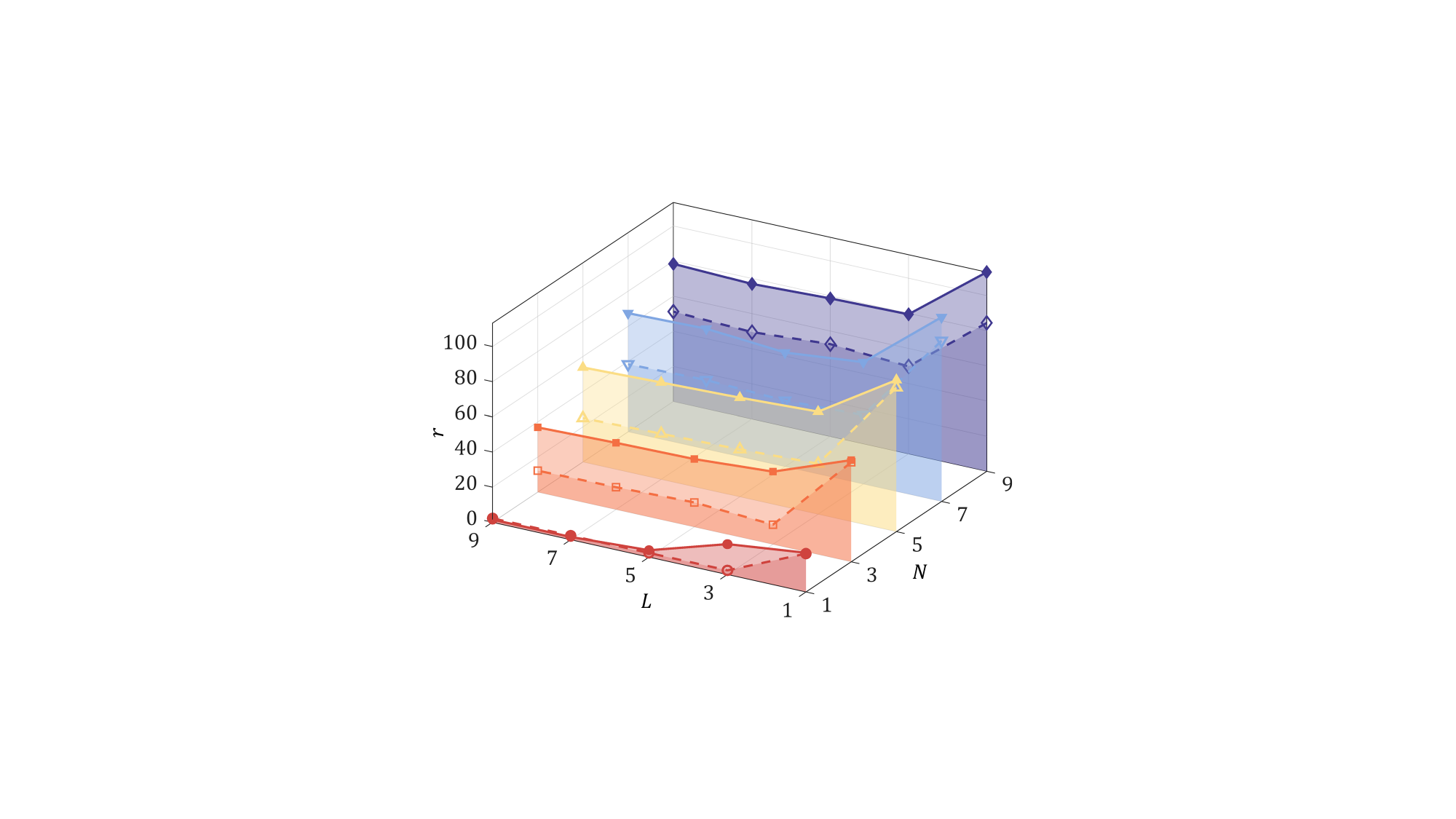}
        \caption{$M = 10$}
    \end{subfigure}
    \begin{subfigure}[!tb]{0.245\linewidth}
        \includegraphics[width=1\linewidth]{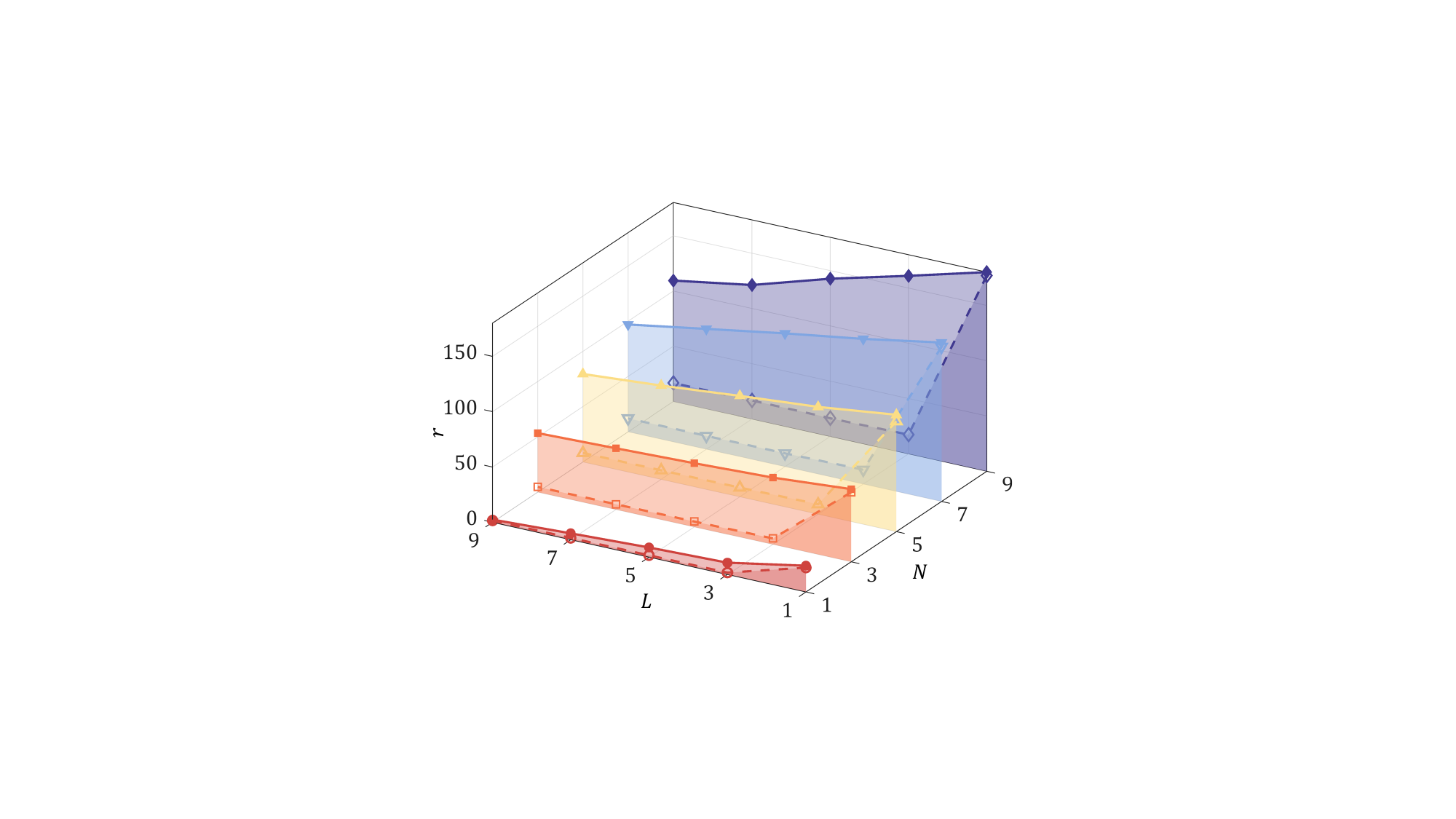}
        \caption{$M = 100$}
    \end{subfigure}
\caption{Comparison of the achievable sum rate under different $N$, $L$, and $M$, evaluated using two beamforming matrix design schemes: the solid curves correspond to WMMSE, while the dashed curves correspond to ZF.
}
\label{f:sumrate}
\end{figure*}

\begin{figure*}[!tb]
\centering
    \begin{subfigure}[!tb]{0.245\linewidth}
        \centering
        \includegraphics[width=1\linewidth]{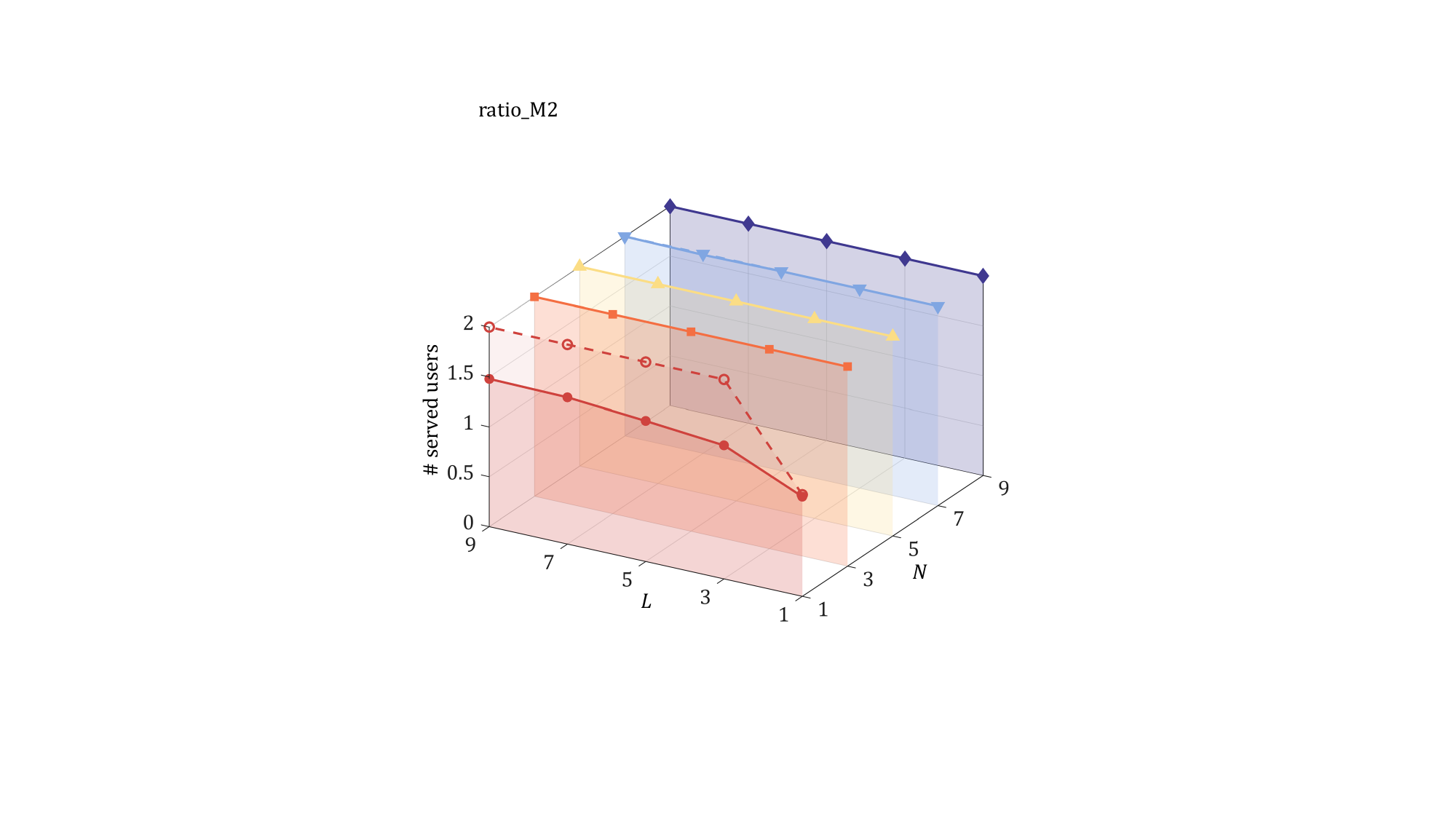}
        \caption{$M = 2$}
    \end{subfigure}
    \begin{subfigure}[!tb]{0.245\linewidth}
        \includegraphics[width=1\linewidth]{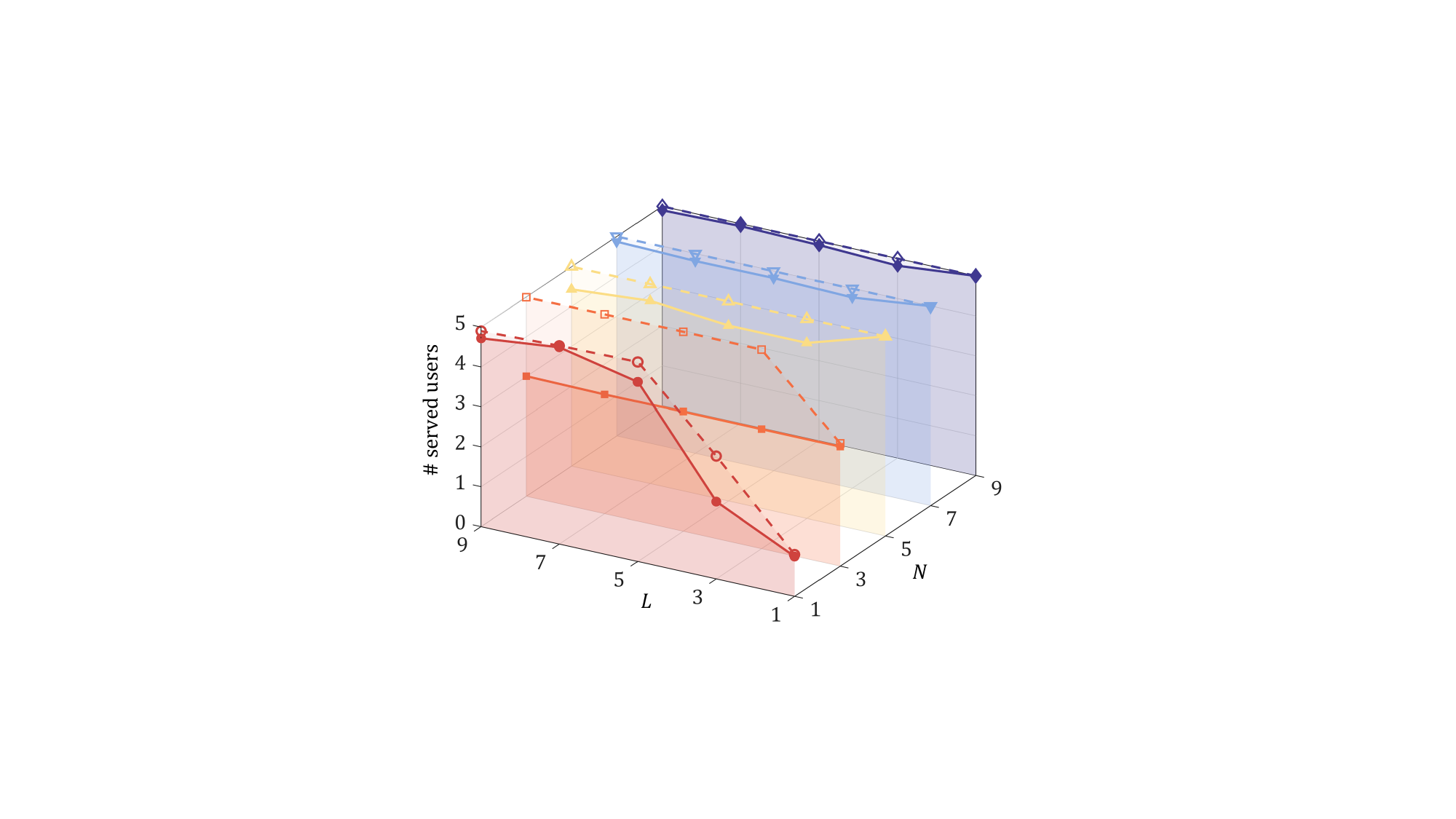}
        \caption{$M = 5$}
    \end{subfigure}
    \begin{subfigure}[!tb]{0.245\linewidth}
        \includegraphics[width=1\linewidth]{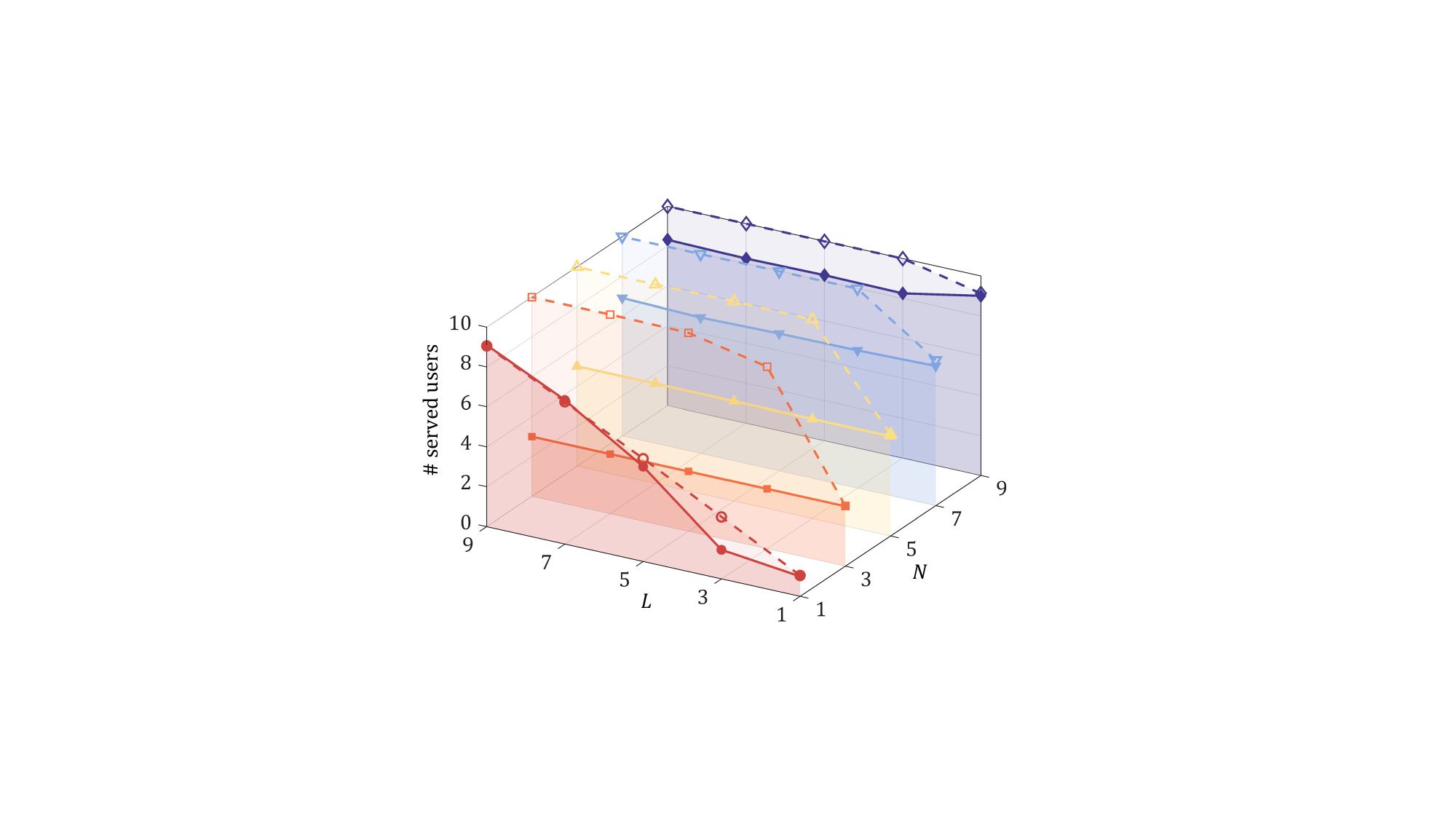}
        \caption{$M = 10$}
    \end{subfigure}
    \begin{subfigure}[!tb]{0.245\linewidth}
        \includegraphics[width=1\linewidth]{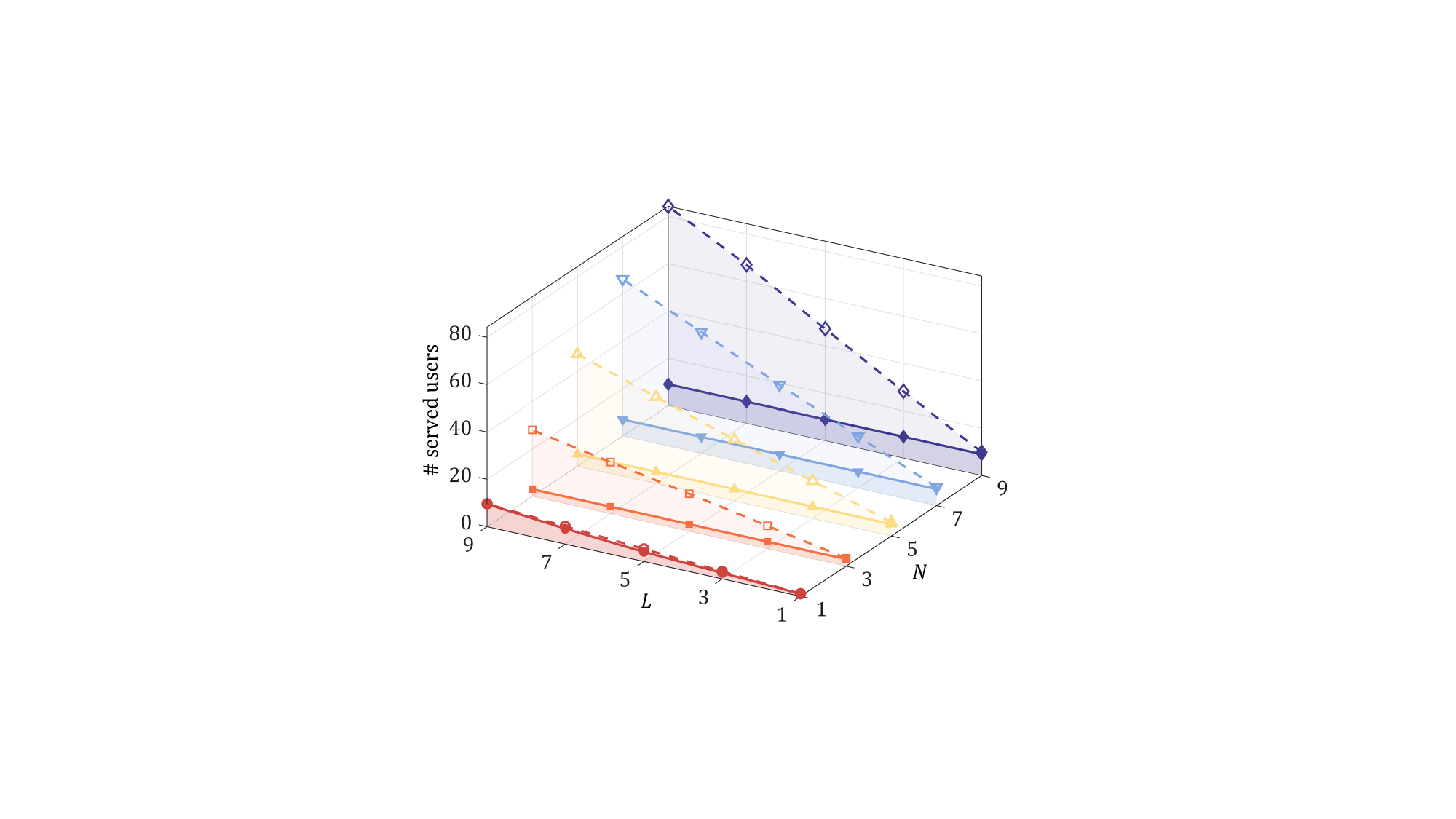}
        \caption{$M = 100$}
    \end{subfigure}
\caption{Comparison of the number of users successfully served under different values of $N$, $L$, and $M$ using two beamforming matrix design schemes: the solid curves correspond to WMMSE, while the dashed curves correspond to ZF.
}
\label{f:served_user}
\end{figure*}

\subsection{Multi-PA System Performance}
We now evaluate the multi-PA performance using the optimization framework developed in Section \ref{sec:IVB}, comparing WMMSE [29] and ZF [30] beamforming strategies across various system configurations. 

Fig.~\ref{f:sumrate} reveals several key trends in achievable sum rate under varying waveguide densities, antenna densities, and user densities. The observations are as follows:
\begin{itemize}[leftmargin=0.5cm]
    \item Increasing the number of waveguides $N$ consistently improves sum rate by providing more independent channels. This benefit is particularly dramatic when $N < M$ , where sum rate grows almost linearly with $N$, while saturation occurs when $N > M$ due to diminishing diversity returns.
    \item Increasing antennas per waveguide $L$ only benefits systems with excess waveguide diversity ($N > M$). When $N \leq M$, since multiple antennas on the same waveguide emit mutually coherent signals (differing only by a complex constant), increasing $L$, although allowing more users to be served, introduces additional interference and therefore does not enhance the overall sum rate.
    \item ZF approaches WMMSE performance only under favorable conditions (e.g., $N \geq M$ or $L=1$), where interference can be effectively nullified. In more challenging scenarios, WMMSE significantly outperforms ZF by optimally balancing signal enhancement and interference suppression.
\end{itemize}

Furthermore, we investigate the number of users that can be effectively served under different system configurations, as shown in Fig.~\ref{f:served_user}. The number of served users is defined as the number of users with non-zero data rates. It can be observed that, for $N > M$, the WMMSE-based system is capable of serving all users simultaneously. When $N \leq M$, however, WMMSE maximizes the sum rate by automatically allocating more power to users with favorable channel conditions, while suppressing the rates of weak-channel users to nearly zero. In contrast, the ZF scheme prioritizes serving as many users as possible, albeit at the expense of a lower overall sum rate.

Between these two beamforming (BF) design strategies, WMMSE achieves a higher sum rate, whereas ZF provides broader user coverage. As can be observed in Fig.~\ref{f:sumrate}, when a single antenna is deployed per waveguide, the ZF scheme can serve at most $N$ users without introducing additional interference. Therefore, the choice between WMMSE and ZF can be flexibly made depending on the system requirements and deployment scenarios.

\section{Conclusion}
This paper has undertaken a critical shift in the paradigm for modeling and optimizing PASS. By moving beyond the prevalent yet physically inaccurate omni-directional assumption, we have established the DiPASS framework, which places the inherent directional radiation of leaky-wave antennas at the forefront of system design.

Our core contribution is a comprehensive and physically consistent channel model that integrates three pivotal real-world factors: a Gaussian beam pattern for directional radiation, a practical waveguide attenuation coefficient, and a stochastic LoS blockage model. A key enabler within this model is the novel ``equal quota division'' strategy, which resolves a fundamental practical barrier by ensuring predetermined, position-independent coupling lengths for multiple antennas on a single waveguide.



While this work has focused on a static channel model, the DiPASS framework opens several promising avenues for future research. The dynamic optimization of PASS in time-varying channels with mobile users presents a significant challenge. Furthermore, integrating DiPASS with other emerging 6G technologies, such as integrated sensing and communication and joint communication and computation, could unlock new functionalities. Finally, the exploration of more complex waveguide topologies and multi-band operation remains an open and exciting area.

In summary, this work transitions PASS from a proven concept to a physically grounded and practically deployable architecture. By embracing the directional nature of its core components, DiPASS provides a realistic performance benchmark and a powerful set of design principles, paving the way for its successful adoption in future 6G and terahertz communication networks.

\appendices

\section{Proof of Theorem~\ref{thm:condition}}\label{sec:AppA}
To prove Theorem \ref{thm:condition}, we have to analyze the trade-off between waveguide loss and wireless propagation reliability in \eqref{e:partial_ln_H_2}.
To this end, we define a new function
\begin{equation*}
f_{y^{(\text{P})}}(\gamma) \triangleq -\alpha_{\text{W}} - 2 \ln \alpha_{\text{L}} \cdot \frac{\gamma}{\sqrt{A + \gamma^2}} + \frac{2\gamma}{A + \gamma^2},
\end{equation*}
where $\gamma \triangleq y^{(\text{U})} - y^{(\text{P})}$ and $A \triangleq (x^{(\text{U})} - x^{(\text{P})})^2 + (z^{(\text{U})} - z^{(\text{P})})^2$.

The derivative of $f_{y^{(\text{P})}}$ with respect to $\gamma$ is
\begin{equation*}
\frac{\partial f_{y^{(\text{P})}}(\gamma)}{\partial \gamma} = \frac{-2 \ln \alpha_{\text{L}} \cdot A \sqrt{A + \gamma^2} + 2(A - \gamma^2)}{(A + \gamma^2)^2}.
\end{equation*}
The function $f_{y^{(\text{P})}}(\gamma)$ reaches its extremum points when $\frac{\partial f_{y^{(\text{P})}}(\gamma)}{\partial \gamma} = 0$. Since the denominator of $\frac{\partial f_{y^{(\text{P})}}(\gamma)}{\partial \gamma}$ is always positive, we solve
\begin{equation*}
\gamma^2 = \frac{2A + (\ln \alpha_{\text{L}})^2 A^2 \pm \ln \alpha_{\text{L}}  A \sqrt{(\ln \alpha_{\text{L}})^2 A^2 + 8A}}{2}.
\end{equation*}

Given that $\ln \alpha_{\text{L}} < 0$ and $\gamma^2 \geq 0$, we can discard the smaller root. After some mathematical manipulations, we obtain two extremum points of $f_{y^{(\text{P})}}(\gamma)$ as follows:
\begin{align*}
\gamma_{\text{max}} \!&=\! \sqrt{\frac{(2A \!+\! (\ln \alpha_{\text{L}})^2 A^2) \!-\! \ln \alpha_{\text{L}} \!\cdot\! A \sqrt{(\ln \alpha_{\text{L}})^2 A^2 \!+\! 8A}}{2}}, \\
\gamma_{\text{min}} \!&=\! -\sqrt{\frac{(2A \!+\! (\ln \alpha_{\text{L}})^2 A^2) \!-\! \ln \alpha_{\text{L}} \!\cdot\! A \sqrt{(\ln \alpha_{\text{L}})^2 A^2 \!+\! 8A}}{2}}.
\end{align*}
Observing $f_{y^{(\text{P})}}(\gamma)$, for $\gamma_{\text{min}} < 0$, we have $f_{y^{(\text{P})}}(\gamma_{\text{min}}) < 0$. However, if $f_{y^{(\text{P})}}(\gamma_{\text{max}}) < 0$, $\frac{\partial \ln |H|^2}{\partial y^{(\text{P})}} < 0$ always holds, meaning that \eqref{e:partial_ln_H_2} can only achieve its maximum at $y^{(\text{P})} = 0$.


Therefore, a non-origin optimal solution exists only when $f(\gamma_{\text{max}}) > 0$. Substituting $\gamma_{\text{max}}$ into the expression for $f_{y^{(\text{P})}}(\gamma)$, we have
\begin{eqnarray}\label{e:f_yP}
 &&\hspace{-1cm} f_{y^{(\text{P})}}(\gamma_{\text{max}}) =  -\alpha_{\text{W}} - 2 \ln \alpha_{\text{L}} \notag\\
 &&\hspace{0cm} \cdot\sqrt{\frac{2A + (\ln \alpha_{\text{L}})^2 A^2 - \ln \alpha_{\text{L}} \cdot A \sqrt{(\ln \alpha_{\text{L}})^2 A^2 + 8A}}{4A + (\ln \alpha_{\text{L}})^2 A^2 - \ln \alpha_{\text{L}} \cdot A \sqrt{(\ln \alpha_{\text{L}})^2 A^2 + 8A}}}\notag \\
 &&\hspace{1cm}+ 2 \frac{\sqrt{\frac{(2A + (\ln \alpha_{\text{L}})^2 A^2) - \ln \alpha_{\text{L}} \cdot A \sqrt{(\ln \alpha_{\text{L}})^2 A^2 + 8A}}{2}}}{\frac{4A + (\ln \alpha_{\text{L}})^2 A^2 - \ln \alpha_{\text{L}} \cdot A \sqrt{(\ln \alpha_{\text{L}})^2 A^2 + 8A}}{2}} \notag\\
 &&\hspace{0cm} \overset{(a)}{=}  -\alpha_{\text{W}} - 2 \ln \alpha_{\text{L}} \sqrt{\frac{T}{2A + T}} + 2\sqrt{2} \frac{\sqrt{T}}{2A + T},
\end{eqnarray}
where we have defined $2A + (\ln \alpha_{\text{L}})^2 A^2 - \ln \alpha_{\text{L}} \cdot A \sqrt{(\ln \alpha_{\text{L}})^2 A^2 + 8A} \triangleq T$ in step (a).

Taking the derivative of $T$ with respect to $\alpha_{\text{L}}$:
\begin{equation*}
\begin{aligned}
\frac{\partial T}{\partial \alpha_{\text{L}}} 
= &~ 2A^2 \ln \alpha_{\text{L}} \cdot \frac{1}{\alpha_{\text{L}}} - \frac{1}{\alpha_{\text{L}}} A \sqrt{(\ln \alpha_{\text{L}})^2 A^2 + 8A} \\
& - \left((\ln \alpha_{\text{L}})^2 A^2 + 8A\right)^{-1/2} (\ln \alpha_{\text{L}})^2 A^3 \cdot \frac{1}{\alpha_{\text{L}}}.
\end{aligned}
\end{equation*}
Since all terms are negative, $\frac{\partial T}{\partial \alpha_{\text{L}}} < 0$, indicating that $T$ is monotonically decreasing with respect to $\alpha_{\text{L}}$. Since $T(\alpha_{\text{L}} = 1) = 2A$, for $\alpha_{\text{L}} \in (0,1)$, we always have $T > 2A$.

Let $f_{\alpha_{\text{L}}}(\alpha_{\text{L}}) = f_{y^{(\text{P})}}(\gamma_{\text{max}})$. We now examine the extremum of $f_{\alpha_{\text{L}}}(\alpha_{\text{L}})$ with respect to $\alpha_{\text{L}}$:
\begin{equation*}
\begin{aligned}
&\frac{\partial f_{\alpha_{\text{L}}}(\alpha_{\text{L}})}{\partial \alpha_{\text{L}}} = 
-\frac{2}{\alpha_{\text{L}}} \sqrt{\frac{T}{2A + T}} \\
&\hspace{1cm}- \frac{2A \ln \alpha_{\text{L}} (2A + T)^{1/2} - \sqrt{2}(2A - T)}{\sqrt{T}(2A + T)^2} \cdot \frac{\partial T}{\partial \alpha_{\text{L}}}.
\end{aligned}
\end{equation*}
Substituting the specific expression for $T$ reveals that $2A \ln \alpha_{\text{L}} (2A + T)^{1/2} - \sqrt{2}(2A - T) = 0$.
Thus, $\frac{\partial f_{\alpha_{\text{L}}}(\alpha_{\text{L}})}{\partial \alpha_{\text{L}}} < 0$, indicating that $f_{\alpha_{\text{L}}}(\alpha_{\text{L}})$ is monotonically decreasing with respect to $\alpha_{\text{L}}$. Letting $T = uA$ and taking a specific value $\alpha_{\text{L}} = \exp\left(\frac{2 - u}{\sqrt{4 + 2u} \sqrt{A}}\right)$ (for $u > 2$), we have:
\begin{equation*} 
f_{y^{(\text{P})}}(\gamma_{\text{max}}) 
= -\alpha_{\text{W}} + \frac{u\sqrt{2u}}{2 + u} \cdot \frac{1}{\sqrt{A}}.
\end{equation*}
It can be easily shown that $f_u(u) = \frac{u\sqrt{2u}}{2 + u}$ is an increasing function for $u > -2$, and for specific $\alpha_{\text{W}}$ and $A$, there exists a feasible $u$ such that $f(\gamma_{\text{max}}) < 0$. For example, for $A > 9$ and $\alpha_{\text{W}} = 1.3$, when $u = 30$,
\begin{equation*}
f(\gamma_{\text{max}}) = -2W + \frac{15\sqrt{15}}{8} \cdot \frac{1}{\sqrt{A}} < 0.
\end{equation*}
Thus, when $\alpha_{\text{L}} > \left.\exp\left(\frac{2 - u}{\sqrt{4 + 2u} \sqrt{A}}\right)\right|_{u=30}  = \exp\left(\frac{-7}{2\sqrt{A}}\right)$, the optimal solution remains at the origin. Therefore, at least $\alpha_{\text{L}} < \exp\left(\frac{-7}{2\sqrt{A}}\right)$ is required, which implies that $T > uA = 30A \gg 2A$. Consequently, we can approximate \eqref{e:f_yP} as
\begin{eqnarray*}
    f_{y^{(\text{P})}}(\gamma_{\text{max}})
\hspace{-0.2cm}&\approx &\hspace{-0.2cm} -\alpha_{\text{W}} - 2 \ln \alpha_{\text{L}} + 2\sqrt{2} \frac{\sqrt{T}}{T} \notag\\
\hspace{-0.2cm}&\overset{(a)}{\approx} &\hspace{-0.2cm} -\alpha_{\text{W}} - 2 \ln \alpha_{\text{L}},
\end{eqnarray*}
where (a) follows from $\frac{1}{T} < \frac{1}{30A} \ll \alpha_{\text{W}}$, for typical values $A > 9$ and $\alpha_{\text{W}} = 1.3$.

Overall, there exists a non-origin optimal PA position only when $f(\gamma_{\text{max}}) > 0$. That is,
\begin{equation*}
-\alpha_{\text{W}} - 2 \ln \alpha_{\text{L}} > 0,
\end{equation*}
which simplifies to \eqref{eq:THMcondition}.

\section{Proof of Theorem~\ref{thm:optimal_y}}\label{sec:AppB}
Following \eqref{e:f_yP} in the proof of Theorem \eqref{thm:condition}, we introduce an intermediate variable by setting $\gamma = \frac{\sqrt{A} \cos \vartheta}{\sin \vartheta} > 0$ for $\vartheta \in (0, \pi/2)$. Then $\gamma^2 = \frac{A \cos^2 \vartheta}{\sin^2 \vartheta}$. Substituting it into $f_{y^{(\text{P})}}$, we have
\begin{equation}\label{e:f_yP_theta}
\begin{aligned}
f_{y^{(\text{P})}}(\gamma) 
&= -\alpha_{\text{W}} - 2 \ln \alpha_{\text{L}} \cdot \cos \vartheta + \frac{1}{\sqrt{A}} \sin(2\vartheta)\\
& \overset{(a)}{\approx}  -\alpha_{\text{W}} - 2 \ln \alpha_{\text{L}} \left(1 - \frac{1}{2} \vartheta^2\right) + \frac{1}{\sqrt{A}} \cdot 2\vartheta,
\end{aligned}
\end{equation}
where (a) follows from the second-order Taylor approximation.

The optimal position is achieved when $f_{y^{(\text{P})}}(\gamma) = 0$, leading to
\begin{equation*}
-\alpha_{\text{W}} - 2 \ln \alpha_{\text{L}} \left(1 - \frac{1}{2} \vartheta^2\right) + \frac{1}{\sqrt{A}} \cdot 2\vartheta = 0.
\end{equation*}
Since $\vartheta > 0$, we solve for $\vartheta^*$:
\begin{equation*}
\vartheta^* = \frac{-1 - \sqrt{A + \ln \alpha_{\text{L}}  A^2 (\alpha_{\text{W}} + 2 \ln \alpha_{\text{L}})}}{\ln \alpha_{\text{L}} \cdot A}.
\end{equation*}
Given that $\gamma^* = \frac{\sqrt{A} \cos \vartheta^*}{\sin \vartheta^*}$ and $\gamma \triangleq y^{(\text{U})} - y^{(\text{P})}$, we obtain the optimal PA position in \eqref{e:optimal_y}. Note that when $\gamma^* > y^{(\text{U})}$, we can only take $y^* = \max\{0,\, y^{(\text{U})}- \gamma^*\}$.

Next, we provide a more intuitive approximated solution to analyze the trend of optimal position variation.
Considering the second and third terms in \eqref{e:f_yP_theta}, we observe that:
\begin{equation*}
\frac{-2 \ln \alpha_{\text{L}} \cos \vartheta}{\frac{2}{\sqrt{A}} \cos \vartheta \sin \vartheta} > \frac{\sqrt{A} \alpha_{\text{W}}}{2 \sin \vartheta} > \frac{\sqrt{A} \alpha_{\text{W}}}{2} > 1.
\end{equation*}
Therefore, we can reasonably neglect the third term and write
\begin{equation*}
f_{y^{(\text{P})}}(x) \approx -\alpha_{\text{W}} - 2 \ln \alpha_{\text{L}} \cos \vartheta.
\end{equation*}

Setting $f_{y^{(\text{P})}}(x) =0$, the optimal $\vartheta^*$ can be obtained, yielding the approximated $\gamma^*$ in \eqref{e:optimal_y_approx}.

\bibliographystyle{IEEEtran}
\bibliography{Ref}

@article{kim2024role,
  title={{Role of sensing and computer vision in 6G wireless communications}},
  author={Kim, Seungnyun and Moon, Jihoon and Kim, Jinhong and Ahn, Yongjun and Kim, Donghoon and Kim, Sunwoo and Shim, Kyuhong and Shim, Byonghyo},
  journal={IEEE Wirel. Commun.},
  year={2024},
  month={Jul.}
}

@article{wang2024electromagnetic,
  title={{Electromagnetic information theory: Fundamentals and applications for 6G wireless communication systems}},
  author={Wang, Cheng-Xiang and Yang, Yue and Huang, Jie and Gao, Xiqi and Cui, Tie Jun and Hanzo, Lajos},
  journal={IEEE Wirel. Commun.},
  volume={31},
  number={5},
  pages={279--286},
  month={May},
  year={2024},
  publisher={IEEE}
}

@article{dang2020should,
  title={{What should 6G be?}},
  author={Dang, Shuping and Amin, Osama and Shihada, Basem and Alouini, Mohamed-Slim},
  journal={Nat. Electron.},
  volume={3},
  number={1},
  pages={20--29},
  month={Jan.},
  year={2020}
}

@article{liu2022evolution,
  title={{Evolution of NOMA toward next generation multiple access (NGMA) for 6G}},
  author={Liu, Yuanwei and Zhang, Shuowen and Mu, Xidong and Ding, Zhiguo and Schober, Robert and Al-Dhahir, Naofal and Hossain, Ekram and Shen, Xuemin},
  journal={IEEE J. Sel. Areas Commun.},
  volume={40},
  number={4},
  pages={1037--1071},
  month={Jan.},
  year={2022}
}

@article{shao2024theory,
  title={A theory of semantic communication},
  author={Shao, Yulin and Cao, Qi and G{\"u}nd{\"u}z, Deniz},
  journal={IEEE Trans. Mob. Comput.},
  volume={23},
  number={12},
  pages={12211--12228},
  month={May},
  year={2024}
}

@article{shao2021federated,
  title={Federated edge learning with misaligned over-the-air computation},
  author={Shao, Yulin and G{\"u}nd{\"u}z, Deniz and Liew, Soung Chang},
  journal={IEEE Trans. Wirel. Commun.},
  volume={21},
  number={6},
  pages={3951--3964},
  month={Nov.},
  year={2021}
}

@article{ali2025compact,
  title={{A compact dual-band reconfigurable antenna with metamaterial for IoT applications}},
  author={Ali, Abdullah Hasan and Rasool, Jamal Mohammed and Abd, Ali Kadhum and Islam, Md Shabiul and Ouda, Mohamed and Alawad, Mohamad A and Islam, Mohammad Tariqul},
  journal={Scientific Reports},
  volume={15},
  number={1},
  pages={21039},
  month={Jul.},
  year={2025},
  publisher={Nature Publishing Group UK London}
}

@article{zhang2024polarization,
  title={{Polarization aware movable antenna}},
  author={Zhang, Runxin and Shao, Yulin and Eldar, Yonina C},
  journal={IEEE Trans. Wirel. Commun.},
  year={2025},
  month = {Nov.}
}

@article{alsaedi2023spectrum,
  title={{Spectrum options and allocations for 6G: A regulatory and standardization review}},
  author={Alsaedi, Wijdan K and Ahmadi, Hamed and Khan, Zaheer and Grace, David},
  journal={IEEE Open J. Commun. Soc.},
  volume={4},
  pages={1787--1812},
  year={2023},
  month={Aug.}
}

@article{chow2024recent,
  title={{Recent advances and future perspectives in optical wireless communication, free space optical communication and sensing for 6G}},
  author={Chow, Chi-Wai},
  journal={Journal of Lightwave Technology},
  volume={42},
  number={11},
  pages={3972--3980},
  year={2024},
  month={Apr.}
}

@article{zhu2023modeling,
  title={{Modeling and performance analysis for movable antenna enabled wireless communications}},
  author={Zhu, Lipeng and Ma, Wenyan and Zhang, Rui},
  journal={IEEE Trans. Wirel. Commun.},
  volume={23},
  number={6},
  pages={6234--6250},
  year={2023},
  month={Nov.}
}

@article{tang2020wireless,
  title={{Wireless communications with reconfigurable intelligent surface: Path loss modeling and experimental measurement}},
  author={Tang, Wankai and Chen, Ming Zheng and Chen, Xiangyu and Dai, Jun Yan and Han, Yu and Di Renzo, Marco and Zeng, Yong and Jin, Shi and Cheng, Qiang and Cui, Tie Jun},
  journal={IEEE Trans. Wirel. Commun.},
  volume={20},
  number={1},
  pages={421--439},
  year={2020},
  month={Sep.}
}

@article{dai2020reconfigurable,
  title={{Reconfigurable intelligent surface-based wireless communications: Antenna design, prototyping, and experimental results}},
  author={Dai, Linglong and Wang, Bichai and Wang, Min and Yang, Xue and Tan, Jingbo and Bi, Shuangkaisheng and Xu, Shenheng and Yang, Fan and Chen, Zhi and Di Renzo, Marco and others},
  journal={IEEE Access},
  volume={8},
  pages={45913--45923},
  year={2020},
  month={Mar.}
}

@article{wu2021intelligent,
  title={{Intelligent reflecting surface-aided wireless communications: A tutorial}},
  author={Wu, Qingqing and Zhang, Shuowen and Zheng, Beixiong and You, Changsheng and Zhang, Rui},
  journal={IEEE Trans. Commun.},
  volume={69},
  number={5},
  pages={3313--3351},
  year={2021},
  month={Jan.}
}

@article{wong2020fluid,
  title={{Fluid antenna systems}},
  author={Wong, Kai-Kit and Shojaeifard, Arman and Tong, Kin-Fai and Zhang, Yangyang},
  journal={IEEE Trans. Wirel. Commun.},
  volume={20},
  number={3},
  pages={1950--1962},
  year={2020},
  month={Nov.}
}

@article{shi2011iteratively,
  title={{An iteratively weighted MMSE approach to distributed sum-utility maximization for a MIMO interfering broadcast channel}},
  author={Shi, Qingjiang and Razaviyayn, Meisam and Luo, Zhi-Quan and He, Chen},
  journal={IEEE Trans. Signal Process.},
  volume={59},
  number={9},
  pages={4331--4340},
  year={2011},
  month={Apr.}
}

@article{wiesel2008zero,
  title={{Zero-forcing precoding and generalized inverses}},
  author={Wiesel, Ami and Eldar, Yonina C and Shamai, Shlomo},
  journal={IEEE Trans. Signal Process.},
  volume={56},
  number={9},
  pages={4409--4418},
  year={2008},
  month={Sep.}
}

@book{papadimitriou1998combinatorial,
  title={{Combinatorial optimization: algorithms and complexity}},
  author={Papadimitriou, Christos H and Steiglitz, Kenneth},
  year={1998},
  publisher={Courier Corporation}
}

@article{jackson2012leaky,
  title={Leaky-wave antennas},
  author={Jackson, David R and Caloz, Christophe and Itoh, Tatsuo},
  journal={ Proc. IEEE},
  volume={100},
  number={7},
  pages={2194--2206},
  year={2012},
  month={Mar.}
}

@article{suzuki2022pinching,
  title={{Pinching antenna: Using a dielectric waveguide as an antenna}},
  author={Suzuki, Hiroshi Okazaki Yasunori and Kawai, Kunihiro},
  journal={NTT DOCOMO Technical J.},
  volume={23},
  number={3},
  pages={5--12},
  year={2022},
  month={Jan}
}

@book{yeh2008essence,
  title={{The essence of dielectric waveguides}},
  author={Yeh, Cavour and Shimabukuro, Fred I},
  year={2008},
  publisher={Springer}
}

@book{okamoto2021fundamentals,
  title={{Fundamentals of optical waveguides}},
  author={Okamoto, Katsunari},
  year={2021},
  publisher={Elsevier}
}

@article{ding2025flexible,
  title={{Flexible-antenna systems: A pinching-antenna perspective}},
  author={Ding, Zhiguo and Schober, Robert and Poor, H Vincent},
  journal={IEEE Trans. Commun.},
  year={2025},
  month={Mar.}
}

@article{tyrovolas2025performance,
  title={{Performance analysis of pinching-antenna systems}},
  author={Tyrovolas, Dimitrios and Tegos, Sotiris A and Diamantoulakis, Panagiotis D and Ioannidis, Sotiris and Liaskos, Christos K and Karagiannidis, George K},
  journal={IEEE Trans. Cogn. Commun. Netw.},
  year={2025},
  month={Apr.}
}

@article{ouyang2025capacity,
  title={{Capacity Characterization of Pinching-Antenna Systems}},
  author={Ouyang, Chongjun and Wang, Zhaolin and Liu, Yuanwei and Ding, Zhiguo},
  journal={arXiv preprint arXiv:2506.14298},
  year={2025}
}

@article{liu2025pinching,
  title={{Pinching-antenna systems (PASS): A tutorial}},
  author={Liu, Yuanwei and Jiang, Hao and Xu, Xiaoxia and Wang, Zhaolin and Guo, Jia and Ouyang, Chongjun and Mu, Xidong and Ding, Zhiguo and Nallanathan, Arumugam and Karagiannidis, George K and others},
  journal={arXiv preprint arXiv:2508.07572},
  year={2025}
}

@article{chen2025dynamic,
  title={{Dynamic placement of pinching antennas for multicast MU-MISO downlinks}},
  author={Chen, Jung-Chieh and Wu, PO-Ching and Wong, Kai-Kit},
  journal={IEEE Open J. Commun. Soc.},
  year={2025},
  month={Jun}
}

@article{wang2025joint,
  title={{Joint Transmit and Pinching Beamforming Design for Pinching Antenna-assisted Symbiotic Radio}},
  author={Wang, Ze and Zhang, Guoping and Xu, Hongbo and Liu, Wei and Zeng, Ming and Fang, Fang and Niyato, Dusit},
  journal={arXiv preprint arXiv:2508.07002},
  year={2025}
}

@article{wang2025antenna,
  title={{Antenna activation for NOMA assisted pinching-antenna systems}},
  author={Wang, Kaidi and Ding, Zhiguo and Schober, Robert},
  journal={IEEE Wireless Commun. Lett.},
  year={2025},
  month={Mar.}
}

@article{wang2025pinching,
  title={{Pinching-antenna systems with LoS blockages}},
  author={Wang, Kaidi and Ouyang, Chongjun and Liu, Yuanwei and Ding, Zhiguo},
  journal={IEEE Wireless Commun. Lett.},
  year={2025},
  month={Sep.}
}

@article{samy2025pinching,
  title={{Pinching antenna systems versus reconfigurable intelligent surfaces in mmWave}},
  author={Samy, Mostafa and Al-Hraishawi, Hayder and Alsenwi, Madyan and Adam, Abuzar BM and Chatzinotas, Symeon and others},
  journal={IEEE Wireless Commun. Lett.},
  year={2025},
  month={Sep.}
}

@article{tegos2025minimum,
  title={{Minimum data rate maximization for uplink pinching-antenna systems}},
  author={Tegos, Sotiris A and Diamantoulakis, Panagiotis D and Ding, Zhiguo and Karagiannidis, George K},
  journal={IEEE Wireless Commun. Lett.},
  year={2025},
  month={Mar.}
}

@article{wang2025modeling,
  title={{Modeling and beamforming optimization for pinching-antenna systems}},
  author={Wang, Zhaolin and Ouyang, Chongjun and Mu, Xidong and Liu, Yuanwei and Ding, Zhiguo},
  journal={IEEE Trans. Commun.},
  year={2025},
  month={Oct.}
}

@article{xiao2025frequency,
  title={{Frequency-selective modeling and analysis for OFDM-integrated wideband pinching-antenna systems}},
  author={Xiao, Jian and Wang, Ji and Zeng, Ming and Liu, Yuanwei and Karagiannidis, George K},
  journal={IEEE Wireless Commun. Lett.},
  year={2025},
  month={Aug.}
}

@article{ouyang2025rate,
  title={{Rate region of ISAC for pinching-antenna systems}},
  author={Ouyang, Chongjun and Wang, Zhaolin and Liu, Yuanwei and Ding, Zhiguo},
  journal={arXiv preprint arXiv:2505.10179},
  year={2025}
}

@article{ouyang2025uplink,
  title={{Uplink and Downlink Communications in Segmented Waveguide-Enabled Pinching-Antenna Systems (SWANs)}},
  author={Ouyang, Chongjun and Jiang, Hao and Wang, Zhaolin and Liu, Yuanwei and Ding, Zhiguo},
  journal={arXiv preprint arXiv:2509.10666},
  year={2025}
}

\end{document}